% Revised version as submitted to POTA on 24/05/17

\documentclass[11pt, reqno]{amsart}
\usepackage[a4paper]{geometry}
\usepackage[T1]{fontenc}
\usepackage{inputenx}
\usepackage{setspace}
\setlength{\parindent}{0mm} 
\usepackage[centertags]{amsmath}
\usepackage{amssymb}
\usepackage[pdftex]{graphicx}
\usepackage{amsthm}
\usepackage{enumitem}
\usepackage{bbm}
\usepackage{amsfonts}
\usepackage{datetime}

\newcommand{\E}{\mathbb E}
\newcommand{\R}{\mathbb R}
\newcommand{\prob}{\mathbb P}

\newcommand{\indicator}[1]{\mathbbm{1}_{ {#1}  }}

\newcommand{\sfrac}[2] {\mbox{$\frac{#1}{#2}$}}

\newtheorem{teo}{Theorem}[section]

\newtheorem{cor}[teo]{Corollary}
\newtheorem{prop}[teo]{Proposition}
\newtheorem{lem}[teo]{Lemma}

\newdateformat{ukvardate}{\THEDAY\ \monthname[\THEMONTH]\ \THEYEAR}

%\everymath{\displaystyle}
%\linespread{1.6}
\title[Large deviations via matrix products]
{The semi-infinite asymmetric exclusion process:\\ large deviations via matrix products \vspace{-0.3cm}
%\centerline{\rm \small \ukvardate\today} \vspace{-8mm}}
}

\author[Horacio G. Duhart, Peter M\"orters and Johannes Zimmer]{Horacio Gonz\'alez Duhart, Peter M\"orters and Johannes Zimmer}

\begin{document}
%%%%%%%%%%%%%%%%%%%%%%%%%%%%%%%%%%%%%%%%%%
%%%%%%%%%%%%%%%%%%%%%%%%%%%%%%%%%%%%%%%%%%
%%%%%%%%%%%%%%%%%%%%%%%%%%%%%%%%%%%%%%%%%%
\centerline{\rm \small}

\begin{abstract}
We study the totally asymmetric exclusion process on the positive integers with a single particle source at the origin. Liggett (1975) has shown that the long term behaviour of this process has a phase transition: If the particle production rate at the source and the original density are below a critical value, the stationary measure is a product measure, otherwise the stationary measure is spatially correlated. Following the approach of Derrida \emph{et al.}~(1993) it was shown by Grosskinsky~(2004) that these correlations can be described by means of a matrix product representation. In this paper we derive a large deviation principle with explicit rate function for the particle density in a macroscopic box based on this representation. The novel and rigorous technique we develop for this problem combines spectral theoretical and combinatorial ideas and is potentially applicable to other models described by matrix products.
\end{abstract}

\maketitle

\vspace{-.5cm}

{\footnotesize {\bf MSC classification:} Primary 60F10; Secondary 37K05, 60K35, 82C22.\\[-1mm]
{\bf Keywords:} large deviation principle, matrix product ansatz, Toeplitz operator, exclusion process, open\\[-1mm] 
boundary, phase transition, out of equilibrium.}

%\vspace{-.2cm}
  
%Introduction
%%%%%%%%%%%%%%%%%%%%%%%%%%%%%%%%%%%%%%%%%%
%%%%%%%%%%%%%%%%%%%%%%%%%%%%%%%%%%%%%%%%%%
%%%%%%%%%%%%%%%%%%%%%%%%%%%%%%%%%%%%%%%%%%
\section{Introduction}
Many natural systems are not in thermodynamic equilibrium, which loosely speaking means that there is a permanent exchange of energy or matter of the system with its surroundings or within the system itself. In statistical physics, the asymmetric exclusion process is often considered the paradigm of such a system out of equilibrium. In the absence of a general theory for systems out of equilibrium, it has been argued that large deviation rate functions play an important role as a replacement for the thermodynamical potential~\cite{BSG}. The principal aim of the present paper is to develop a rigorous mathematical technique to derive such rate functions from a particular type of representation of the stationary state of the system, the matrix products, which twenty years after the pioneering work of Derrida \emph{et al.}~\cite{DEHP} is available for a wide range of particle systems 
out of equilibrium, see for example Blythe and Evans~\cite{BE} for a survey.
\medskip

We present our method in the case of the totally asymmetric exclusion process (TASEP) on the positive integers with a single particle source at the origin, a case which has apparently not been treated in the literature so far. In this Markovian model, particles are positioned on the sites of the semi-infinite lattice $\mathbb N = \{1, 2, \dots \}$ in such a way that no site carries more than one particle. The dynamics of the model can be informally described as follows: A particle source carries a Poisson clock with intensity $\alpha>0$. If this clock rings, the source attempts to inject a particle at site one. If this site is vacant the injection takes place, otherwise it is suppressed and nothing happens. Also, every particle in the system carries an independent Poisson clock with rate one, and when the clock rings the particle tries to jump to the neighbouring site on its right. If this site is vacant the jump takes place, otherwise it 
is suppressed. Note that the exclusion interaction originating from the suppression of jumps and injections ensures that no site ever carries more than one particle.
\pagebreak[3]
\medskip

The exclusion interaction in this model has a profound effect on the behaviour of the system. Most notably the detailed balance
equations for this Markov chain have no nontrivial solution. Hence the system is not reversible, in other words it is out of
equilibrium. The long-term behaviour of the process shows local convergence to a stationary measure which depends on the initial
configuration of the system. Assuming that initially particles are iid Bernoulli with density~$\rho$, this stationary measure has
an interesting phase transition described by Liggett~\cite{liggettergodic}. If the injection rate $\alpha$ satisfies
$\alpha\leq\tfrac12$ and $\rho\leq 1-\alpha$ the system does not feel the interaction and the stationary measure is the product
measure with density $\alpha$. If however $\alpha>\tfrac12$ or $\rho>1-\alpha$, the exclusion of particles leads to spatial
correlations in the stationary measure, which is no longer a product measure. In this case, the overall particle density at
stationarity is the maximum of $1/2$ and the initial density $\rho$, independently of the injection rate $\alpha$.  \medskip

There have been considerable efforts to describe the long range correlations of the stationary measures and the microscopic transition kernels in the exclusion process explicitly. For instance, Sasamoto and Williams~\cite{SW} and  Tracy and Widom~\cite{TW} derive explicit formulas from combinatorial identities, and Sasamoto~\cite{S} uses an ansatz based on orthogonal polynomials. A particularly successful approach to describe spatial correlations  is the matrix product ansatz first suggested in 1993~by Derrida, Evans, Hakim and Pasquier~\cite{DEHP} and refined and extended in a large number of papers, see \cite{D,EFM, H} for a few further examples. 
%\medskip
%
Large deviation principles have been derived for the hydrodynamical limits of a range of boundary driven exclusion processes by Bertini and coauthors~\cite{BSGLL, BLM}
and the method should be extendable to our case. In principle, large deviation principles for the  particle density in a macroscopic box then follow from these results 
by contraction, see \cite{BG}. However, the optimisation in path space, which is  required to get an explicit rate function, is often unwieldy and technical as Bahadoran's  
paper \cite{B} readily testifies.%  
\medskip%

In the light of these difficulties it is a natural idea to try and derive large deviation principles directly from the matrix product ansatz. This plan
was carried out by Derrida \emph{et al.}~\cite{DLS} in the case of an asymmetric exclusion process on a finite interval of sites. Key to their method is a saddle point argument, 
which allows to derive an additivity formula  which compares the stationary measure on the interval with stationary measures on complementary subintervals.  
From this formula an explicit rate function for the particle  density is derived. The paper \cite{DLS} was a spectacular  success, but we have not been able to implement this method in the case of a semi-infinite lattice. In a different development, Angeletti \emph{et al.}~\cite{ATBA} show that 
already for matrix product representations with  finite matrices the large deviation principles that arise from this exhibit a rich phenomenology. Finite matrix representations have 
the advantage that they can be studied using the Perron-Frobenius theory, which is unavailable for infinite matrices. Physical examples, however, are almost always based on representations 
by infinite matrices. 
\bigskip

\pagebreak[3]

In this paper we present a rigorous and novel approach to calculate large deviations for the macroscopic particle density in the
semi-infinite totally asymmetric exclusion process. We use the matrix product representation as a starting point, and base the
analysis on the G\"artner-Ellis theorem. To study the asymptotics of the cumulant generating function of the particle density, we
use quite different approaches for the lower and upper bounds. The lower bound is based on the spectral theory of Toeplitz
operators in a suitable weighted sequence space, while the upper bound exploits combinatorial identities coming directly from the
matrix product ansatz. As our method is not too technical, we believe that it is very promising to deal with a wide range of other
particle systems whose stationary measure can be described by a matrix product representation.%
\bigskip%

The paper is organised as follows. In Section~2 we give a rigorous definition of the~model, state background results and formulate 
and interpret our main result. Section 3 discusses the matrix product representation in this case and describes our approach to the large deviation problem. 
The proof of the upper bound for the  cumulant generating function is carried out in Section 4, while the lower bound is derived in Section 5. The proof is
completed in Section 6, in which we also provide some further comments on our technique.
%\bigskip

% The TASEP
%%%%%%%%%%%%%%%%%%%%%%%%%%%%%%%%%%%%%%%%%%
%%%%%%%%%%%%%%%%%%%%%%%%%%%%%%%%%%%%%%%%%%
%%%%%%%%%%%%%%%%%%%%%%%%%%%%%%%%%%%%%%%%%%
\section{The semi-infinite TASEP}
\subsection{Background}

To give a formal definition of the model, we first define the auxiliary switching and swapping functions \mbox{$\sigma^{x},\sigma^{x,y}\colon\{0,1\}^{\mathbb N}\to\{0,1\}^{\mathbb N}$} by
\[(\sigma^{x,y}\eta)_z=\begin{cases} \eta_y & \text{if }z=x \\ \eta_x & \text{if }z=y \\ \eta_z & \text{if }z\notin \{x,y\}, \\ \end{cases} \qquad
(\sigma^{x}\eta)_z=\begin{cases} 1-\eta_x & \text{if }z=x \\ \eta_z & \text{if }z\neq x.  \end{cases} \]
A semi-infinite totally asymmetric exclusion process (TASEP) with injection rate $\alpha\in(0,1)$ is a Markov process $\{\xi(t)\}_{t\geq0}$ in continuous time with state space $\{0,1\}^{\mathbb N}$ and semigroup $S(t)$ identified by its infinitesimal generator $G$ defined by
\begin{align}\label{generator}
 Gf(\eta) & =  \alpha(1-\eta_1)\left(f(\sigma^{1}\eta)-f(\eta)\right) \nonumber\\
          & +  \sum_{k\in\mathbb N} \eta_k(1-\eta_{k+1})\left(f(\sigma^{k,k+1}\eta)-f(\eta)\right),
\end{align}
where $f\colon\{0,1\}^{\mathbb N}\to\mathbb R$ is a function that depends only on a finite number of sites.
\medskip

Denote by $\nu_\alpha$ the \emph{product measure} with constant density $\alpha$, that is
\[\nu_\alpha\{\eta\in\{0,1\}^{\mathbb N} \colon \eta_{j_1}=1,\eta_{j_2}=1,
\ldots,\eta_{j_n}=1\}=\alpha^n\]for all distinct choices of
$j_1,j_2,\ldots,j_n\in\mathbb N$ and all $n\in\mathbb N$.
We say a measure $\mu_\rho$ on $\{0,1\}^{\mathbb N}$ is \emph{asymptotically product} with density $\rho$ if
\[\lim_{k\rightarrow\infty}\mu_\rho\{\eta\in\{0,1\}^{\mathbb N}\colon \eta_{j_1+k}=1,\eta_{j_2+k}=1,
\ldots,\eta_{j_n+k}=1\}=\rho^n\]for all distinct choices of
$j_1,j_2,\ldots,j_n\in\mathbb N$ and all $n\in\mathbb N$.
It is known that the semi-infinite TASEP is not an ergodic process and there is no uniqueness of the stationary
measure as proved in Theorem 1.8 of~\cite{liggettergodic}.
%\pagebreak[3]

\begin{teo}\cite[Theorem 1.8]{liggettergodic}\label{liggettteo}
Let $\mu$ be a product measure on $\{0,1\}^{\mathbb N}$ for which
$\rho:=\lim_{k\rightarrow\infty}\mu\{\eta:\eta_k=1\}$ exists. Then
there exist probability measures $\mu_\varrho^\alpha$ defined if
\begin{itemize}
\item either $\alpha\leq \frac12$ and $\varrho>1-\alpha$, 
\item or $\alpha> \frac12$ and $\frac12 \leq \varrho \leq 1$,
\end{itemize}
which are asymptotically product with density $\varrho$, such that
%\[\begin{array}{rcc}
%   \text{if } \alpha\leq\frac{1}{2}\text{ then } & {\displaystyle\lim_{t\rightarrow\infty}\mu
%S(t)}= & \left\{\begin{array}{cl}  \nu_\alpha & \text{ if }\rho\leq1-\alpha \\[2mm]
%\mu_\rho^\alpha & \text{ if }\rho>1-\alpha, \\ \end{array} \right.\\
% & & \\
%   \text{and if } \alpha >  \frac{1}{2}\text{ then } & {\displaystyle\lim_{t\rightarrow\infty}\mu
%S(t)}= & \left\{\begin{array}{cl} \mu_{1/2}^\alpha & \text{if
% $\rho\leq\frac{1}{2}$}\phantom{-\alpha}\\ & \\ \mu_\rho^\alpha & \text{if
% $\rho>\sfrac{1}{2}$.}\phantom{-\alpha}  \\ \end{array}  \right.\\
%  \end{array}\]
  
\[\begin{array}{rl}
   \text{if } \alpha\leq\frac{1}{2}\text{ then } & {\displaystyle\lim_{t\rightarrow\infty}\mu
S(t)}=\begin{cases}  \nu_\alpha & \text{ if }\rho\leq1-\alpha \\
\mu_\rho^\alpha & \text{ if }\rho>1-\alpha, \end{cases} \\[4mm]
   \text{and if } \alpha >  \frac{1}{2}\text{ then } & {\displaystyle\lim_{t\rightarrow\infty}\mu
S(t)}= \begin{cases} \mu_{1/2}^\alpha & \text{if }\rho\leq\frac{1}{2}\phantom{-\alpha,}\\ 
  \mu_\rho^\alpha & \text{if }\rho>\sfrac{1}{2}.\phantom{-\alpha} \end{cases} \\
  \end{array}\]  
  
\end{teo}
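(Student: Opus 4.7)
The proof rests on three ingredients: attractivity of the TASEP, stationarity of $\nu_\alpha$, and a tail density analysis of $\mu S(t)$ at sites far from the boundary. I would first construct the semigroup via the standard graphical representation with independent Poisson clocks at each site and at the source. This yields the basic coupling and shows that $S(t)$ is monotone with respect to the stochastic order on $\{0,1\}^{\mathbb N}$. Applied to the extremal initial conditions it produces an increasing family $t\mapsto\delta_{\emptyset}S(t)$ with weak-$*$ limit $\mu_{\min}^\alpha$ and a decreasing family $t\mapsto\delta_{\mathbf{1}}S(t)$ with limit $\mu_{\max}^\alpha$, both of which are stationary, and any trajectory $\mu S(t)$ starting from a product measure is sandwiched between them.

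Next I would use a direct generator computation to pin down invariant product measures. Applying $G$ from \eqref{generator} to a cylinder function and integrating against $\nu_\beta$ reduces, by a telescoping of the bulk swap terms, to an imbalance $(1-\beta)(\alpha-\beta)$ between the effective injection current $\alpha(1-\beta)$ and the bulk current $\beta(1-\beta)$. This vanishes precisely for $\beta=\alpha$, so $\nu_\alpha$ is stationary. For $\alpha\le\tfrac12$, the inclusion $\delta_\emptyset\le\nu_\alpha$ in stochastic order, combined with attractivity, gives $\mu_{\min}^\alpha\le\nu_\alpha$; a matching lower bound on site densities, obtained from time-averaging the current conservation equation across a finite block, forces equality $\mu_{\min}^\alpha=\nu_\alpha$. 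Together with the stochastic sandwich this settles the first case of the theorem for every initial product measure with $\rho\le 1-\alpha$.

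The remaining regimes require a tail density analysis, comparing the semi-infinite process far from the origin to the bi-infinite TASEP on $\mathbb Z$ whose invariant product measures obey the current density relation $j(\rho)=\rho(1-\rho)$. A current-balance argument combining the injection-limited value $\alpha(1-\alpha)$, the maximal-current value $\tfrac14$ and the bulk-limited value $\rho(1-\rho)$ selects the asymptotic site density as $\tfrac12$ in the maximal current phase ($\alpha>\tfrac12$, $\rho\le\tfrac12$), and as $\rho$ in the two jammed regimes ($\alpha\le\tfrac12$, $\rho>1-\alpha$, or $\alpha>\tfrac12$, $\rho>\tfrac12$). I would then \emph{define} $\mu_\varrho^\alpha$ as the weak-$*$ limit of $\mu S(t)$ for the corresponding initial product measure of density $\varrho$; existence of the limit follows from the monotone sandwich applied to product initial measures of suitable slightly larger and smaller densities, and the asymptotic product property comes from the finite speed of propagation of correlations in TASEP together with the identification of the tail density. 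The hardest step, I expect, is the uniqueness of these limits, i.e.\ showing that $\lim_{t\to\infty}\mu S(t)$ depends on the initial product measure only through its asymptotic density $\rho$ and not on finer features; this is most naturally controlled by a second-class particle coupling that tracks the propagation of discrepancies between two realisations of the basic coupling starting from product measures with the same density.
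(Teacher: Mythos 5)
The paper does not prove this statement: Theorem~\ref{liggettteo} is quoted verbatim from Liggett's 1975 ergodic theorem for the semi-infinite simple exclusion process, and the authors use it as a black box (together with Grosskinsky's matrix representation and Proposition~\ref{unique}). So there is no in-paper proof for me to match your attempt against; what I can do is assess whether your sketch would succeed as a reconstruction of Liggett's argument.

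Your outline correctly identifies the main tools --- graphical construction, attractivity via the basic coupling, sandwiching by the extremal evolutions $\delta_\emptyset S(t)\uparrow\mu_{\min}^\alpha$ and $\delta_{\mathbf 1}S(t)\downarrow\mu_{\max}^\alpha$, and the stationarity of $\nu_\alpha$ (which your generator/telescoping computation does establish, and indeed holds for \emph{all} $\alpha\in(0,1)$). But there are two substantive gaps. First, the sandwich you describe only closes when the initial product density satisfies $\rho\le\alpha$ (so that $\nu_\rho\preceq\nu_\alpha$ and hence $\nu_\rho S(t)\preceq\nu_\alpha$ for all $t$); it does not address the range $\alpha<\rho\le 1-\alpha$ in the case $\alpha\le\tfrac12$, where $\nu_\rho$ and $\nu_\alpha$ are not comparable in the stochastic order yet the claimed limit is still $\nu_\alpha$. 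Handling that range is precisely where the convergence is driven by a shock between density $\rho$ and density $\alpha$ escaping to $+\infty$ because its velocity $1-\alpha-\rho$ is nonnegative; this is the heart of Liggett's proof and is not captured by ``time-averaging the current conservation equation across a finite block,'' which gives only that the stationary current is constant, not a matching lower bound on site marginals. Second, the passage from a hydrodynamic current-balance heuristic (selecting $\tfrac12$, $\alpha$, or $\rho$) to a statement about weak convergence of $\mu S(t)$ to a specific invariant measure needs to be made rigorous, typically through a coupling with the translation-invariant process on $\mathbb Z$ together with control of the discrepancy particles; your last paragraph acknowledges this but leaves it as the ``hardest step'' rather than an argument. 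In short, the toolkit is right, but the two ingredients that actually make the phase boundaries $\rho=1-\alpha$ and $\alpha=\tfrac12$ appear --- shock propagation and the $\mathbb Z$-comparison --- are currently asserted rather than proved.
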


Observe the notation we adopt throughout this paper: The \emph{initial} particle density for the TASEP is denoted~$\rho$, whereas
$\varrho$ is the \emph{stationary} particle density under~$\mu^\alpha_\varrho$.  Convergence in Theorem~\ref{liggettteo} is not
uniform and slows down the further sites are from the origin.  The initial density~$\rho$ determines the stationary measure the
process will converge to. In particular, starting from all sites empty, if the injection rate satisfies
~\smash{$\alpha\leq \frac12$}, the distribution of the process converges to the product measure with constant~density~$\alpha$. If
\smash{$\alpha>\frac12$}, the distribution of the process converges to~\smash{$\mu_{^{\frac12}}^\alpha$}, which has spatial
correlations and an asymptotic density equal to~\smash{$\frac12$}. Observe that the injection mechanism is not able to produce a
stationary particle density larger than~\smash{$\frac12$} unless there is initially a high density of particles in the system.
\medskip

We  will explore matrix representations of the measures  $\mu_\varrho^\alpha$ in Section~3.

\subsection{Main result}

Our problem at hand is to find the rate function for a large deviation principle of the empirical density on the first $n$ sites
\begin{equation}\label{empdensity}Z_n=\frac{1}{n}\sum_{k=1}^n\eta_k\end{equation}
under the stationary measure, as $n$ goes to infinity.  Recall from Theorem~2.1 that the answer to this problem depends in a
subtle way on the spatial correlations occurring in the case $\alpha>\frac12$.  \medskip

The theory of large deviations analyses the exponential decay of probabilities of increasingly unlikely events. Formally, a sequence of random variables $\{Z_n\}_{n\in\mathbb N}$ 
taking values in $[0,1]$ satisfies a large deviation principle with rate function $I\colon[0,1]\to[0,\infty]$ under the probability measure $\prob$ if
\begin{enumerate}
\item[(i)] the function $I$ is lower semicontinuous,
\item[(ii)] for all open sets $G\subset [0,1]$ we have \[\liminf_{n\to\infty}\frac{1}{n}\log\prob\{Z_n\in G\}\geq -\inf_{z\in G}I(z),\]
\item[(iii)] and for all closed sets $F\subset [0,1]$ we have \[\limsup_{n\to\infty}\frac{1}{n}\log\prob\{Z_n\in F\}\leq -\inf_{z\in F}I(z).\]
\end{enumerate}

The main result of this paper is the following large deviation principle.

\pagebreak[3]

\begin{teo}\label{main}
Let $\{Z_n\}_{n\in\mathbb N}$ be the sequence of random variables defined as the empirical density (\ref{empdensity}) of a semi-infinite TASEP with injection rate $\alpha\in(0,1)$ and initial asymptotically product measure for which $\rho$ as defined in Theorem \ref{liggettteo} exists. Then, under the stationary probability measure given by Theorem~\ref{liggettteo}, $\{Z_n\}_{n\in\mathbb N}$ satisfies a large deviation principle with convex rate function $I\colon [0,1]\to[0,\infty]$ given as follows. 
\begin{enumerate}
\item[(a)] If $\alpha\leq\frac{1}{2}$ and $\rho<1-\alpha$,  then \[I(z)= z\log \frac{z}{\alpha} + (1-z)\log \frac{1-z}{1-\alpha}.\]  
%\\[-2mm]
%\end{enumerate}
%\begin{enumerate}
\item[(b)] If $\alpha>\frac{1}{2}$  and $0\leq \rho\leq \frac{1}{2}$, then 
\[I(z)=\left\{\begin{array}{ccc} {\displaystyle
 z\log\frac{ z}{\alpha} + (1-z)\log\frac{1-z}{1-\alpha}+\log\left(4\alpha(1-\alpha)\right)} & \text{ if }&0 \leq  z \leq 1-\alpha, \\[2mm] 
{\displaystyle 2\left[z\log z+(1-z)\log(1-z)+\log2 \right]} & \text{ if }& 1-\alpha < z \leq \frac{1}{2}, \\[2mm]
{\displaystyle z\log z+(1-z)\log(1-z)+\log2}  & \text{ if }&\frac{1}{2}  <  z \leq  1.
\end{array}\right.\] \\[-2mm]
\item[(c)] If $\alpha>\frac{1}{2}$  and $\frac{1}{2}< \rho< \alpha$, then 
\[I(z)=\left\{\begin{array}{ccc} {\displaystyle
 z\log\frac{ z}{\alpha} + (1-z)\log\frac{1-z}{1-\alpha}+\log\frac{\alpha(1-\alpha)}{\rho(1-\rho)}} & \text{ if }&0 \leq  z \leq 1-\alpha, \\[2mm] 
{\displaystyle 2\left[z\log z+(1-z)\log(1-z)-\log\sqrt{\rho(1-\rho)} \right]} & \text{ if }& 1-\alpha < z \leq 1-\rho, \\[2mm]
{\displaystyle z\log \frac{z}{\rho}+(1-z)\log\frac{1-z}{1-\rho}}  & \text{ if }& 1-\rho <  z \leq  1.
\end{array}\right.\]
\end{enumerate}
\end{teo}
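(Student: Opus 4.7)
My plan is to verify the hypotheses of the G\"artner--Ellis theorem for $\{Z_n\}$ and then compute the Legendre transform of the limiting scaled cumulant generating function. Set
\begin{equation*}
\Lambda_n(\lambda)=\frac{1}{n}\log\E\bigl[e^{\lambda n Z_n}\bigr]=\frac{1}{n}\log\E\Bigl[\textstyle\prod_{k=1}^n\bigl(1+(e^\lambda-1)\eta_k\bigr)\Bigr]
\end{equation*}
under the stationary measure identified in Theorem~\ref{liggettteo}. The goal is to show that $\Lambda(\lambda):=\lim_n\Lambda_n(\lambda)$ exists as a finite convex function, essentially smooth on the interior of its effective domain (with the non-smooth points handled by a direct tilting argument), and that $I(z)=\Lambda^*(z)=\sup_\lambda(\lambda z-\Lambda(\lambda))$ equals the piecewise formula stated in (a), (b), (c). Once $\Lambda$ is pinned down, the upper bound on closed sets follows from G\"artner--Ellis, and the lower bound on open sets is obtained at exposed points by exponential tilting and at the remaining breakpoints by continuity of the (convex) $I$ on the interval $[0,1]$.

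In case (a) the stationary measure is the product measure $\nu_\alpha$, so $Z_n$ is an average of i.i.d.\ Bernoulli($\alpha$) variables and Cram\'er's theorem gives $\Lambda(\lambda)=\log(1-\alpha+\alpha e^\lambda)$, whose Legendre transform is exactly the entropy expression in~(a). The substantive work is in (b) and (c), where the stationary measure $\mu_\rho^\alpha$ exhibits long-range correlations. Here I invoke the matrix product representation to be recalled in Section~3, so that
\begin{equation*}
\E\Bigl[\textstyle\prod_{k=1}^n e^{\lambda\eta_k}\Bigr]=\frac{\langle W|\,(e^\lambda D+E)^n\,|V\rangle}{\langle W|\,(D+E)^n\,|V\rangle},
\end{equation*}
where $D,E$ satisfy the Derrida--Evans--Hakim--Pasquier algebra and the boundary vectors $\langle W|$, $|V\rangle$ encode the injection rate $\alpha$ and the asymptotic density $\rho$. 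Thus $\Lambda(\lambda)$ is the exponential growth rate of a matrix element of the transfer operator $T_\lambda:=e^\lambda D+E$.

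For the upper bound $\limsup_n\Lambda_n(\lambda)\le\Lambda(\lambda)$ I follow the combinatorial route advertised in the introduction: expanding the matrix element $\langle W|T_\lambda^n|V\rangle$ in a convenient basis indexes its terms by lattice paths, and the quadratic algebra $DE=D+E$ together with the boundary identities collapses the sum into a closed expression whose dominant exponential rate can be read off by a saddle-point/Laplace estimate. For the lower bound $\liminf_n\Lambda_n(\lambda)\ge\Lambda(\lambda)$ I realize $T_\lambda$ as a Toeplitz-type operator on a weighted $\ell^2$ space, the weight being chosen so that $|V\rangle$ lies in the space and $T_\lambda$ is bounded; the spectral radius $r(\lambda)$ is then computed from the symbol via standard Toeplitz spectral theory, and $\Lambda(\lambda)\ge\log r(\lambda)$ is extracted from a variational estimate with carefully chosen test sequences that approximate the top of the spectrum. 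Matching the two bounds yields an explicit $\Lambda(\lambda)$.

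The main obstacle I anticipate is the phase structure of $\Lambda$. Depending on the sign of $\lambda$ and the parameters $\alpha,\rho$, the asymptotic growth of $\langle W|T_\lambda^n|V\rangle$ is driven either by a point in the essential (Toeplitz) spectrum or by an eigenvalue generated by the boundary vectors $\langle W|$ or $|V\rangle$; the competition between these yields a convex $\Lambda$ that is smooth in several sub-regions but has kinks matching precisely the breakpoints $1-\alpha$, $\tfrac12$, $1-\rho$ that appear in (b) and (c). After identifying $\Lambda$ on each sub-regime, the Legendre transform is a routine piecewise computation producing, respectively, the entropy $H(z\|\alpha)$ plus an additive correction in the low-density branch, $2H(z\|\tfrac12)$ or $2H(z\|\sqrt{\rho(1-\rho)})$ in the maximal-current/plateau branch, and $H(z\|\rho)$ in the high-density branch, in agreement with parts~(b) and~(c). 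Verifying convexity and continuity at the junctions closes the proof.
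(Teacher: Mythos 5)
Your overall framework (matrix product representation, G\"artner--Ellis, then a Legendre transform) is the same as the paper's, but you have attached the two key techniques to the \emph{wrong} bounds, and with that assignment both halves of your argument have genuine gaps. You propose to get the \emph{upper} bound on $\Lambda$ combinatorially, by expanding $\langle W|(e^\lambda D+E)^n|V\rangle$ via the algebra $DE=D+E$ and reading off a dominant rate by a saddle point. But the combinatorial expansion $(e^\theta D+E)^n=\sum_{p,j}f^n_{p,j}(\theta)E^{p-j}D^j$ only yields easy one-sided control in the \emph{lower} direction, because all terms are nonnegative and one may simply drop everything except the tractable boundary coefficients $f^n_{p,p}$ and $f^n_{p,0}$; an upper bound would require resolving the full recursion for all $f^n_{p,j}$ and estimating every bracket $w^{\sf T}E^{p-j}D^j v$, which is not done (and the DLS-style saddle-point collapse you invoke is exactly what the authors report they could not implement on the semi-infinite lattice). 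Conversely, you propose to get the \emph{lower} bound from the Toeplitz spectral radius in a weighted $\ell^2_s$ space. Spectral theory naturally gives the opposite inequality: by Cauchy--Schwarz and $|\tilde A(\theta)^n|^{1/n}\to\rho(\tilde A(\theta))$ one obtains $\Lambda(\theta)\le\log\rho(\tilde A(\theta))+\log c$ for \emph{every} admissible weight $s$, hence an upper bound by the infimum over $s\in\bigl((\tfrac1\alpha-1)^2,\tfrac1{\lambda_1^2}\bigr)$. A lower bound of the form $\Lambda\ge\log r$ for a fixed weight would in fact be false in the regimes where the infimum is attained at the endpoints, and even at the optimal $s$ it does not follow from approximate-eigenvector test sequences: the operator is non-normal, the relevant boundary point $\kappa(1)$ of the elliptic spectrum is not an eigenvalue, and nothing guarantees that the \emph{specific} vectors $w$, $v$ pick up the spectral-radius growth. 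Indeed the $\alpha$- and $\rho$-dependent branches of $\Lambda$ are invisible to the Toeplitz symbol; in the paper they enter the upper bound only through the admissible range of weights (membership of $v\in\ell^2_s$, $w\in\ell^2_{s^{-1}}$) and enter the lower bound through the explicit evaluations $w^{\sf T}D^pv\sim(1+\lambda_1)^p$ and $w^{\sf T}E^pv\sim\alpha^{-p}$, not through boundary-generated eigenvalues as you suggest.

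Two further points. First, your contingency plan for ``kinks'' in $\Lambda$ is unnecessary and signals a misreading of the phase structure: the limiting cumulant generating function is continuously differentiable across the junctions $\theta=2\log(\tfrac1\alpha-1)$ and $\theta=-2\log\lambda_1$ (the one-sided derivatives match), which is precisely what lets G\"artner--Ellis be applied directly; the non-analyticities appear in the rate function $I$, not as non-differentiability of $\Lambda$. Second, case (b) is most economically obtained as the limit $c\to\tfrac14$ of case (c) rather than by a separate transform computation. To repair your proposal, swap the roles of the two techniques: weighted-$\ell^2$ Toeplitz spectral theory with optimisation over the weight for the upper bound, and the positive-term combinatorial expansion (with the explicit coefficients $f^n_{p,p}$, $f^n_{p,0}$ and Stirling/Laplace asymptotics) for the lower bound, and then check that the two bounds coincide.
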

\medskip

Recall that part~$(a)$ is well-known and included for completeness. It implies the weak law of large numbers, saying that the
empirical density converges in probability to $\alpha$ if $\alpha\leq\tfrac{1}{2}$, see Figure~\ref{ratesa}. The unique zero of
the rate function moves from $0$ to $\tfrac{1}{2}$ with the value of $\alpha$, at the same time it is getting easier to achieve
any given density larger~than~\smash{$\frac12$}.  \medskip

\begin{figure}[!htb]
\begin{center}
\vskip-0.25in
\includegraphics[width=0.4\textwidth]{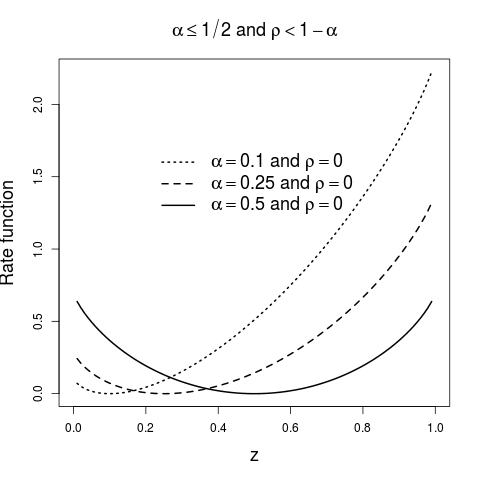}
\includegraphics[width=0.4\textwidth]{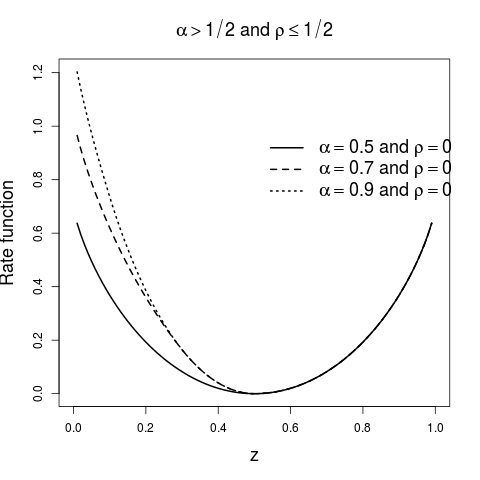}
\end{center}
\caption{Rate functions for case~\emph{(a)} and~\emph{(b)} of Theorem \ref{main}.}\label{ratesa}\label{rate2}
\end{figure} 

Part~$(b)$ shows that the empirical density converges to \smash{$\frac{1}{2}$} if \smash{$\alpha>\frac{1}{2}$} and $\rho<\tfrac{1}{2}$. 
Looking at Figure~\ref{rate2} we see that the rate function is still convex and its zero stays fixed at \smash{$\frac{1}{2}$}. Now reaching high densities has always 
the same cost regardless of the value of $\alpha$, but low densities become increasingly expensive as the value of $\alpha$ increases. 
Note that the rate function is non-analytic at the value $z=1-\alpha$, which reveals a
dynamical phase transition in the sense of~\cite{NT16}. 
\pagebreak[3]
%\medskip

\begin{figure}[!htb]
\begin{center}
\includegraphics[width=0.4\textwidth]{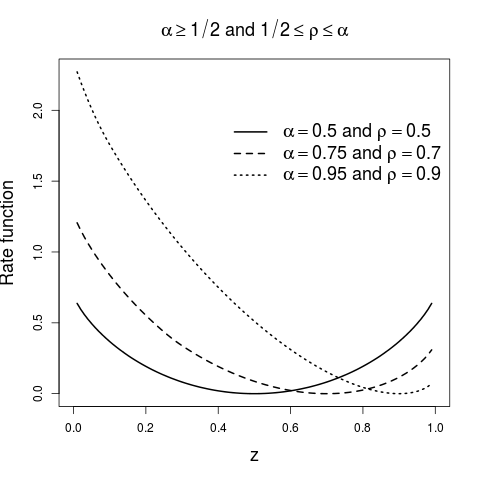}
\includegraphics[width=0.4\textwidth]{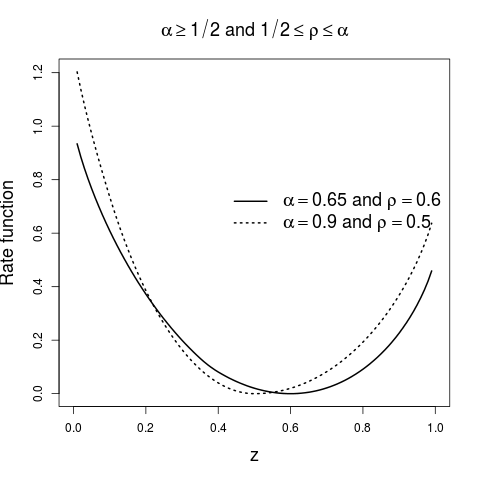}
\vspace{-5mm}
\end{center}
\caption{Rate functions for case \emph{(c)} of Theorem \ref{main}.}\label{rate3}\label{morerate3}
\end{figure} 

For part (c), the minimum of the rate function is now at $\rho$. Low densities still become increasingly expensive as $\alpha$
increases; yet, high densities now become cheaper, see the left diagram of Figure~\ref{rate3}. In this case, we observe that both
$\alpha$ and $\rho$ play a role in the rate function, see the right diagram of Figure~\ref{morerate3} to appreciate the joint
effect. The phase transitions are seen at $z=1-\alpha$, as in the previous case, and $z=1-\rho$.  As $\rho\to \alpha$ we recover
the rate function of Bernoulli product measures.  \medskip

\begin{figure}[!h]
\begin{center}
\includegraphics[width=0.4\textwidth]{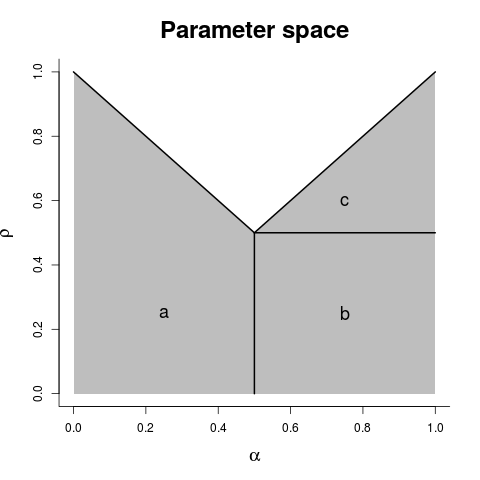}
\end{center}
\vspace{-5mm}
\caption{The range of validity of Theorem~\ref{main} in parameter space is shaded in grey.}\label{pspace}
\end{figure} 

In parts~(b) and~(c), the cost in the first regime, when $z\leq 1-\alpha$,  is up to a shift by a negative constant equal to the cost of changing the boundary density. The rates in the third regime, when $1-z$ is smaller than the typical density, represent the cost of replacing the typical density in the asymptotic Bernoulli product measure by the desired value, so the cost in this regime is bulk dominated. In the second regime the cost is larger than in the third, indicating that both a bulk and a boundary cost have to be paid.
\medskip

The regimes when $\alpha\leq\frac12$ and $\rho\geq 1-\alpha$, or when
$\alpha>\frac12$ and $\rho\geq \alpha$ are not covered by 
% P replaced our theorem  by
the techniques of this paper  and remain open, see Figure \ref{pspace}.
% P added
In the former case we have no matrix product representation.
\medskip

If $\alpha>\frac12$ and $\rho<\alpha$ it is worth comparing our large deviation result for the semi-infinite TASEP with that for the finite TASEP studied in \cite{DLS}. The stationary measure on the semi-infinite TASEP can be obtained as a limit of the stationary measures on the finite TASEP with $n$ sites and boundary densities chosen as $\alpha$ on the left boundary, and $\min\{1-\rho,\tfrac{1}{2}\}$ on the right boundary, see~\cite[Section~3]{liggettergodic}. 
However, the transition rates across bonds in the semi-infinite TASEP are not equal to $\min\{1-\rho,\tfrac{1}{2}\}$. It is therefore somewhat surprising that  the large deviation rate of the sequence of densities in the finite TASEP with increasing system size obtained in~\cite[(3.12)]{DLS} still agrees with the rate we have obtained for the
average density over increasing blocks in the  infinite TASEP at stationarity.  
\medskip
\pagebreak[3]

%MPA
%%%%%%%%%%%%%%%%%%%%%%%%%%%%%%%%%%%%%%%%%%
%%%%%%%%%%%%%%%%%%%%%%%%%%%%%%%%%%%%%%%%%%
%%%%%%%%%%%%%%%%%%%%%%%%%%%%%%%%%%%%%%%%%%
\section{Matrix product ansatz}

Grosskinsky~\cite{grosskinsky}, following the seminal work of~\cite{DEHP}, has given a characterisation 
of the long range dependence in $\mu^\alpha_{\varrho}$ with a matrix product ansatz.

\begin{teo}\cite[Theorem 3.2]{grosskinsky}\label{grossteo}
Suppose there exist (possibly infinite) nonnegative matrices $D$, $E$ and vectors $w$ and $v$, fulfilling the algebraic
relations
\begin{subequations}
\label{mpaconditions} 
\begin{align}
   DE&=D+E, \label{mpaconditionsa} \\
\alpha w^{\sf T}E&=w^{\sf T}, \label{mpaconditionsb}\\
c(D+E)v&=v, \label{mpaconditionsc}
  \end{align}
\end{subequations}
for some $c>0$. Then
\begin{itemize}
\item[(a)] the probability measure $\bar\nu_c^\alpha$ defined by
\begin{equation}\bar\nu^\alpha_c\{\zeta\in\{0,1\}^{\mathbb N} \colon  \zeta_1=\eta_1,\ldots,\zeta_n=\eta_n\}
 =\frac{w^{\sf T}\left(\prod_{k=1}^n\eta_kD+(1-\eta_k)E \right)v}{w^{\sf T}(D+E)^nv} \label{grossmeasure}\end{equation}
is invariant for the generator~\eqref{generator} if and only if 
\begin{itemize}
\item[$\bullet$] either $\alpha\leq\tfrac{1}{2}$ and $0\leq c\leq\alpha(1-\alpha)$ 
\item[$\bullet$] or $\alpha>\tfrac{1}{2}$ and  $0\leq c\leq\frac{1}{4}$.\\[-1mm]
\end{itemize}
\item[(b)] The measure $\bar\nu^\alpha_c$  has stationary current $\E_{\bar\nu^\alpha_c}[\eta_k(1-\eta_{k+1})]=c$, for all $k\ge 1$.
It equals $\nu_\alpha$ if $c=\alpha(1-\alpha)$ and $\alpha\leq\tfrac{1}{2}$, and otherwise it is asymptotically product with density $\varrho$ given 
as the solution of  $c=\varrho(1-\varrho)$ which satisfies $\varrho\geq \frac12$.
\end{itemize}
\end{teo}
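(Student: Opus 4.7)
The plan is to prove Theorem~\ref{grossteo} by a direct algebraic calculation in which each of the three relations in \eqref{mpaconditions} exactly cancels one of three kinds of flux contribution arising when the generator $G$ acts on a cylinder function. First observe that the matrix product formula \eqref{grossmeasure} defines Kolmogorov-consistent finite-dimensional distributions thanks to \eqref{mpaconditionsc}: summing over $\eta_n \in \{0,1\}$ replaces the last factor $\eta_n D + (1-\eta_n) E$ by $D+E$, and since the denominator gains one more factor of $D+E$, the identity $c(D+E)v = v$ reduces the ratio to the formula for $n-1$ sites. Nonnegativity of $D,E,w,v$ then lifts this to a unique probability measure $\bar\nu_c^\alpha$ on $\{0,1\}^{\mathbb N}$.

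For invariance, fix $\tau \in \{0,1\}^n$ and compute $\int G f_\tau \, d\bar\nu_c^\alpha$ for $f_\tau = \indicator{\eta_1 = \tau_1, \ldots, \eta_n = \tau_n}$. The result decomposes into three classes of flux. The injection term at site $1$, after expansion of the matrix product, collapses to a scalar multiple of $\alpha w^T E - w^T$ and vanishes by \eqref{mpaconditionsb}. Each bulk bond $(k, k+1)$ with $1 \leq k < n$ contributes an expression of the form $w^T X_{\tau_1} \cdots X_{\tau_{k-1}} \Phi_k X_{\tau_{k+2}} \cdots X_{\tau_n} v$, where $\Phi_k$ is a short polynomial in $D$ and $E$ of total degree two; substituting $DE = D+E$ rewrites $\Phi_k$ in a form that telescopes across $k$ so that the sum of bulk contributions reduces to boundary residues at $k=1$ and $k=n$. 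The bond $(n, n+1)$ requires summing out the unobserved $\eta_{n+1}$ and produces a factor $(D+E)v$ on the right, which combined with the bulk residue at $k=n$ yields a multiple of $c(D+E)v - v$ that vanishes by \eqref{mpaconditionsc}. Summing over the three classes, $\int G f_\tau \, d\bar\nu_c^\alpha = 0$ for every $\tau$, proving invariance.

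For part~(b), the stationary current at bond $(k,k+1)$ is
\[
\E_{\bar\nu_c^\alpha}[\eta_k(1-\eta_{k+1})] = \lim_{n\to\infty}\frac{w^T (D+E)^{k-1}\, DE\, (D+E)^{n-k-1} v}{w^T (D+E)^n v},
\]
which collapses to $c$ by one application of $DE = D+E$ followed by one application of $c(D+E)v = v$. Similarly, the one-point function is
\[
\bar\nu_c^\alpha\{\eta_k = 1\} = \lim_{n\to\infty}\frac{w^T(D+E)^{k-1} D\, (D+E)^{n-k} v}{w^T(D+E)^n v}.
\]
When $\alpha \leq \tfrac{1}{2}$ and $c = \alpha(1-\alpha)$, the one-dimensional scalar representation $D = 1/(1-\alpha)$, $E = 1/\alpha$ satisfies all three algebraic relations and directly identifies $\bar\nu_c^\alpha$ with the product measure $\nu_\alpha$. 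For the remaining admissible $c$, I would use the standard tridiagonal infinite-matrix representation of Derrida \emph{et al.}~\cite{DEHP} and extract the limit from the spectral expansion of $(D+E)^{n-k}v$, identifying the density as $\varrho = \tfrac{1}{2}(1+\sqrt{1-4c})$, i.e.\ the unique solution of $c = \varrho(1-\varrho)$ with $\varrho \geq \tfrac{1}{2}$.

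The characterisation of the admissible set for $c$ in part~(a) emerges from the same spectral picture: the requirement $c(D+E)v = v$ with nonnegative $v$ forces $1/c$ to lie in the positive spectrum of $D+E$, which for the standard representation is $[4,\infty)$ and gives the upper bound $c \leq \tfrac{1}{4}$; in the regime $\alpha \leq \tfrac{1}{2}$ the compatibility of $\alpha w^T E = w^T$ with nonnegative $w$ further restricts to $c \leq \alpha(1-\alpha)$. The main obstacle is the spectral analysis of $D+E$ near $c = \tfrac{1}{4}$, where the absence of a spectral gap means the convergence of the one-point ratio calls for delicate Toeplitz-operator asymptotics — precisely the machinery that the rest of the paper develops for its large deviation bounds.
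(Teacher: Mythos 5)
You should first note that Theorem~\ref{grossteo} is quoted from Gro\ss kinsky \cite{grosskinsky} as background; this paper contains no proof of it, so there is no internal argument to compare yours with, and I can only assess your sketch on its own terms. Your treatment of the ``if'' direction is the standard matrix-product argument and its outline is sound: consistency of the finite-dimensional distributions via \eqref{mpaconditionsc}, and stationarity by a flux cancellation in which the bulk bond terms telescope using \eqref{mpaconditionsa}, the left residue cancels against the injection term by \eqref{mpaconditionsb}, and the residue at the bond $(n,n+1)$, after summing out $\eta_{n+1}$, cancels by \eqref{mpaconditionsa} together with \eqref{mpaconditionsc}. Likewise the current computation in (b) is correct in spirit (no limit in $n$ is needed: one application of \eqref{mpaconditionsa} and one of \eqref{mpaconditionsc} give exactly $c$), and the scalar representation does identify $\bar\nu^\alpha_c$ with $\nu_\alpha$ when $c=\alpha(1-\alpha)$, $\alpha\le\tfrac12$.

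The genuine gap is the ``only if'' half of part (a), and your closing paragraph does not fill it. The hypothesis of the theorem quantifies over \emph{all} nonnegative solutions of \eqref{mpaconditions}, so the stated equivalence cannot be extracted from spectral properties of one particular representation; moreover your concrete claim that nonnegativity of $w$ together with $\alpha w^{\sf T}E=w^{\sf T}$ forces $c\le\alpha(1-\alpha)$ when $\alpha\le\tfrac12$ is contradicted by the explicit solution \eqref{explicit} of this very paper, where $w^{\sf T}=(1,\tfrac1\alpha-1,(\tfrac1\alpha-1)^2,\dots)$ is nonnegative for every $\alpha\in(0,1)$, does not depend on $c$, and the whole family \eqref{explicit} satisfies \eqref{mpaconditions} for every $0<c\le\tfrac14$. (Also, the $\ell^2$ spectrum of $D+E$ for that representation is $[0,4]$; what matters for $c\le\tfrac14$ is the existence of a \emph{nonnegative} eigenvector for the eigenvalue $1/c$, which concerns existence of the representation, already assumed, not invariance.) Note that your cancellation computation nowhere uses the parameter restrictions, so if it were complete it would ``prove'' invariance also for $\alpha\le\tfrac12$, $\alpha(1-\alpha)<c\le\tfrac14$, which the theorem excludes; this signals that something essential is missing, namely the well-definedness and normalisability of the ratios in \eqref{grossmeasure} (one needs $0<w^{\sf T}(D+E)^nv<\infty$, equivalently $0<w^{\sf T}v<\infty$ by \eqref{mpaconditionsc}; for \eqref{explicit} this holds only when $\varrho<\alpha$, which is exactly the regime exploited later in the paper), and, for the ``only if'', a comparison with the known classification of invariant measures of the semi-infinite TASEP (Liggett, cf.\ Theorem~\ref{liggettteo}): a measure that is asymptotically product with density $\varrho<1-\alpha$ when $\alpha\le\tfrac12$ cannot be stationary. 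Finally, in (b) the identification of the asymptotic density $\varrho=\tfrac12(1+\sqrt{1-4c})$ again rests on asymptotics of one representation that you only gesture at, and on the unaddressed point that the measure depends only on $(\alpha,c)$ and not on the representation chosen.
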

\medskip

\pagebreak[3]

Very often, to apply Theorem~\ref{grossteo}, no explicit solution of~\eqref{mpaconditions} is needed. Below we only use the recursive structure of these equations to show that the measures \smash{$\bar\nu_{c}^\alpha$} and \smash{$\mu_{\varrho}^\alpha$} agree under certain conditions. 
Note that this does not follow directly from Theorem~\ref{liggettteo} 
as this result does not describe the long-term behaviour of the TASEP started in 
\smash{$\bar\nu_{c}^\alpha$}, which is not necessarily a product measure.
\smallskip

\begin{prop}\label{unique} If $\alpha\geq \frac12$, $\varrho\geq \frac12$ and $c=\varrho(1-\varrho)$, then
the measures $\bar\nu_c^\alpha$ and $\mu_\varrho^\alpha$ agree.\end{prop}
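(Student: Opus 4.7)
The plan is to identify both $\bar\nu_c^\alpha$ and $\mu_\varrho^\alpha$ as the unique invariant probability measure on $\{0,1\}^{\mathbb N}$ which is asymptotically product with density $\varrho$. The measure $\bar\nu_c^\alpha$ is invariant by Theorem~3.2(a) and asymptotically product with density $\varrho$ by Theorem~3.2(b) together with $c = \varrho(1-\varrho)$ and $\varrho \geq \tfrac12$. The measure $\mu_\varrho^\alpha$ is asymptotically product with density $\varrho$ by its defining property in Theorem~2.1; it is invariant because it arises as the weak limit $\lim_{t\to\infty}\nu_\varrho S(t)$ of a Feller semigroup applied to a product measure. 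Hence the proof reduces to showing uniqueness of an invariant measure with these properties.

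For the uniqueness step, I would use the basic coupling (common Poisson clocks) to compare the stationary process with initial law $\bar\nu_c^\alpha$ to the process with initial law $\nu_\varrho$, the product measure of density $\varrho$. The former retains the law $\bar\nu_c^\alpha$ at every time $t\geq 0$, while by Theorem~2.1 the latter converges in distribution to $\mu_\varrho^\alpha$. The goal is to show that the marginals of the two processes on any finite window $\{1,\dots,N\}$ at time $t$ differ by an amount that vanishes as $t\to\infty$, which would force $\bar\nu_c^\alpha = \mu_\varrho^\alpha$ on all finite-dimensional marginals and hence as measures. The key mechanism is the attractivity of TASEP combined with finite propagation of discrepancies: because all Poisson clocks have bounded rates, the sites in $\{1,\dots,N\}$ at time $t$ essentially depend on the initial configuration within a window $\{1,\dots,M(t)\}$ for some $M(t)$ growing (at most) linearly in $t$, so two initial laws whose restrictions to such a window are close in total variation produce close marginals on $\{1,\dots,N\}$ at time $t$. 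Since $\bar\nu_c^\alpha$ is asymptotically product with density $\varrho$, its restriction far from the origin is well approximated by $\nu_\varrho$, which is exactly what is needed to carry out the comparison after an appropriate truncation beyond a large window.

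The main obstacle is turning ``discrepancies propagate at bounded speed'' into a quantitative total-variation estimate near the boundary. In TASEP, disagreements between two coupled configurations can move leftward by a chain of consecutive jumps of holes, so the relevant estimate must control the density of such chains reaching $\{1,\dots,N\}$ within time $t$. The standard tools are second-class particles or discrepancy couplings, whose dynamics can be bounded by a random walk with exponential waiting times; combined with the asymptotic product property of $\bar\nu_c^\alpha$ (which makes the discrepancies initially supported in the far field asymptotically indistinguishable from those of $\nu_\varrho$), this should deliver the required convergence as $t\to\infty$ and hence the claimed equality.
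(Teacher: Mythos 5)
Your overall strategy---identify both measures as \emph{the} invariant measure that is asymptotically product with density $\varrho$---is different from the paper's, and its crucial step is exactly the part you leave unproved. The paper sidesteps any dynamical argument: it invokes part (e) of Liggett's Theorem~3.10, which says that $\mu_\varrho^\alpha$ is \emph{uniquely determined} by two explicit algebraic identities on cylinder probabilities (a current-type relation for configurations with $\eta_u=1,\eta_{u+1}=0$, and a boundary relation involving $\alpha$), and then verifies these two identities for $\bar\nu_c^\alpha$ by a few lines of matrix algebra using \eqref{mpaconditionsa}--\eqref{mpaconditionsc}. Your route instead requires proving a genuine uniqueness/convergence statement for the semi-infinite dynamics, and that is where the proposal breaks down: you never establish it, ending with ``this should deliver the required convergence.'' This is not a routine verification---indeed the paper points out explicitly that Theorem~\ref{liggettteo} does not apply to the evolution started from $\bar\nu_c^\alpha$, because that measure is not a product measure; closing this gap by coupling is essentially the whole difficulty of the proposition.

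Moreover, the specific comparison you sketch does not work as stated. Finite propagation speed forces the relevant window to grow linearly, $M(t)\sim t$, and on $\{1,\dots,M(t)\}$ the laws $\bar\nu_c^\alpha$ and $\nu_\varrho$ are \emph{not} close in total variation: they differ at order one near the origin (already the density at site $1$ differs), and the asymptotic product property only controls marginals far from the boundary. Truncating ``beyond a large window'' does not rescue the argument, because the order-one discrepancies you discard sit immediately adjacent to the fixed window $\{1,\dots,N\}$ whose marginal you want to control; you would need to show that the influence of these boundary discrepancies on that window washes out as $t\to\infty$ (a local relaxation statement for the semi-infinite TASEP with a source), and in TASEP discrepancies can indeed drift left when overtaken by first-class particles, so this requires a real estimate, not just bounded-speed considerations. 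Without that estimate---or some substitute such as Liggett's characterisation, which is what the paper uses---the proof is incomplete.
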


\begin{proof}
By part (e) in \cite[Theorem 3.10]{liggettergodic} the measure $\mu_\varrho^\alpha$ is uniquely 
determined by the following two properties, numbered as in \cite{liggettergodic},
\begin{itemize}
\item[(c)] If $u,n\in\mathbb N$ with $1<u<u+1<n$, and $\eta\in\{0,1\}^{n}$ with $\eta_u=1$, $\eta_{u+1}=0$, then
\begin{align*}
\mu_\varrho^\alpha & \{\zeta:\zeta_k  =\eta_k\text{ for }k\leq n\}\\ & =
c\, \mu_\varrho^\alpha\{\zeta:\zeta_k=\eta_k\text{ for } k\leq u-1, \zeta_k=\eta_{k+1}\text{ for }u+1\leq k\leq n-1\}.
\end{align*}
\item[(d)] If $n>1$ and $\eta\in\{0,1\}^{n}$ with $\eta_1=0$, then
\begin{align*}
\alpha\mu_\varrho^\alpha \{\zeta:\zeta_k  =\eta_k\text{ for }k\leq n\} =c\,\mu_\varrho^\alpha\{\zeta:\zeta_k=\eta_{k+1}\text{ for } k\leq n-1\}.
\end{align*}
\end{itemize}
We show that $\bar\nu_c^\alpha$ satisfies these properties. Under the assumptions of (c) we get from properties (\ref{mpaconditionsa}) in the second equality and (\ref{mpaconditionsc}) in the third one
\begin{align*}
\bar\nu_c^\alpha \{\zeta:\zeta_k & =\eta_k\text{ for }k\leq n\}\\ &=\frac{w^{\sf T}\big(\prod_{k=1}^{u-1}\eta_kD+(1-\eta_k)E\big)DE\big(\prod_{k=u+2}^{n}\eta_kD
+(1-\eta_k)E\big)v}{w^{\sf T}(D+E)^nv}\\
& =\frac{w^{\sf T}\big(\prod_{k=1}^{u-1}\eta_kD+(1-\eta_k)E\big)(D+E)\big(\prod_{k=u+2}^{n}\eta_kD+(1-\eta_k)E\big)v}{w^{\sf T}(D+E)^nv}\\
&=c\,\bar\nu_c^\alpha\{\zeta:\zeta_k=\eta_k\text{ for } k\leq u-1, \zeta_k=\eta_{k+1}\text{ for }u+1\leq k\leq n-1\}.
\end{align*}
Under the assumptions of (d) we get from conditions (\ref{mpaconditionsb}) in the second equality and (\ref{mpaconditionsc}) in the third one,
\begin{align*}
\alpha\bar\nu_c^\alpha \{\zeta:\zeta_k  =\eta_k\text{ for }k\leq n\}
&=\alpha \frac{w^{\sf T}E\left(\prod_{k=2}^{n}\eta_kD+(1-\eta_k)E\right)v}{w^{\sf T}(D+E)^nv}\\
& =\frac{w^{\sf T}\left(\prod_{k=2}^{n}\eta_kD+(1-\eta_k)E\right)v}{w^{\sf T}(D+E)^nv}\\
&=c\,\bar\nu_c^\alpha\{\zeta:\zeta_k=\eta_{k+1}\text{ for } k\leq n-1\}.
\end{align*}
Hence $\bar\nu_c^\alpha$ satisfies (c) and (d) and therefore agrees with $\mu_\varrho^\alpha$.
\end{proof}
\medskip

We now explain our approach to find a large deviation principle of the empirical density under this measure. 
We will approach this via the G\"artner-Ellis theorem, see Theorem~V.6 in \cite{hollander}. Here we state 
the conditions specific for our case.

\begin{teo}(G\"artner-Ellis)\label{GE}
Let $\{Z_n\}_{n\in\mathbb N}$ be a sequence of random variables on a probability space $(\Omega,\mathcal A,\prob)$, where $\Omega$ is a non-empty subset of $\mathbb R$. If the limit cumulant generating function $\Lambda\colon\mathbb R\to\mathbb R$ defined by 
\[\Lambda(\theta)=\lim_{n\rightarrow\infty}\tfrac{1}{n}\log\E[e^{n\theta Z_n}]\]
exists and is differentiable on all $\mathbb R$, then $\{Z_n\}_{n\in\mathbb N}$ satisfies a large deviation principle with rate function $I\colon\Omega\to[-\infty,\infty]$ defined by
\[I(z)=\sup_{\theta\in\mathbb R}\{z\theta-\Lambda(\theta)\}.\]
\end{teo}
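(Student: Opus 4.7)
The plan is to establish the two halves of the large deviation principle separately, following the classical Cram\'er-style argument adapted to the exponential form of the hypothesis. The upper bound on closed sets will come from an exponential Markov inequality with an optimally chosen tilting parameter $\theta$, while the lower bound on open sets will be obtained by an exponential change of measure together with a weak law of large numbers for $Z_n$ under the tilted family. Differentiability of $\Lambda$ on all of $\mathbb R$ is what makes both halves fit together: as a pointwise limit of convex functions $\Lambda$ is automatically convex, and combined with differentiability this forces $\Lambda'$ to be continuous and to parametrise the interior of the effective domain of the convex conjugate $I$.

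For the upper bound, fix a closed set $F\subset\Omega$. For any $\theta\ge 0$ the exponential Chebyshev inequality gives
\[ \prob\{Z_n\ge z\}\le e^{-n\theta z}\,\E\bigl[e^{n\theta Z_n}\bigr], \]
so that $\limsup\tfrac1n\log\prob\{Z_n\ge z\}\le -(\theta z-\Lambda(\theta))$; optimising over $\theta\ge 0$ yields $-I(z)$ for $z\ge z^\star:=\Lambda'(0)$, and the symmetric argument with $\theta\le 0$ handles $z\le z^\star$. Splitting $F$ at $z^\star$ and using lower semicontinuity of $I$ (immediate from its definition as a supremum of affine functions) together with the convexity of $I$ to locate the infimum on each half gives $\limsup\tfrac1n\log\prob\{Z_n\in F\}\le -\inf_{z\in F}I(z)$.

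For the lower bound, fix $z$ in an open $G\subset\Omega$, choose $\epsilon>0$ small enough that $(z-\epsilon,z+\epsilon)\subset G$, and pick $\theta=\theta(z)$ with $\Lambda'(\theta)=z$. Define the tilted laws
\[ \frac{d\tilde\prob_n}{d\prob}=\frac{e^{n\theta Z_n}}{\E[e^{n\theta Z_n}]}. \]
A short computation shows the tilted cumulant generating function of $Z_n$ equals $\tilde\Lambda(\mu)=\Lambda(\mu+\theta)-\Lambda(\theta)$, which is differentiable with $\tilde\Lambda'(0)=z$; applying the upper bound already established to $Z_n$ under $\tilde\prob_n$ yields $\tilde\prob_n\{|Z_n-z|\ge\epsilon\}\to 0$. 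Unravelling the Radon--Nikodym derivative gives
\[ \prob\{|Z_n-z|<\epsilon\}\ge e^{n(\Lambda(\theta)-\theta z-|\theta|\epsilon)}\,\tilde\prob_n\{|Z_n-z|<\epsilon\}, \]
so $\liminf\tfrac1n\log\prob\{Z_n\in G\}\ge -I(z)-|\theta|\epsilon$; sending $\epsilon\to 0$ and taking the supremum over $z\in G$ closes the bound.

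The main obstacle is precisely the step ``pick $\theta$ with $\Lambda'(\theta)=z$'': such a $\theta$ exists only when $z$ lies in the range of $\Lambda'$, whereas $G$ may contain points of $\Omega$ outside that range, in particular boundary values of the effective domain of $I$. Here the global differentiability hypothesis on $\Lambda$ is essential, as convexity plus differentiability force $\Lambda'$ to sweep out the interior of $\{I<\infty\}$; points of $\Omega$ on which $I$ is infinite contribute nothing to $\inf_{z\in G}I(z)$, and finite boundary values are recovered by the standard exposed-point limit $\theta\to\pm\infty$ in the tilting estimate.
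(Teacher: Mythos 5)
This theorem is not proved in the paper at all: it is quoted as a known result, attributed to Theorem~V.6 in den Hollander's \emph{Large Deviations}, so there is no ``paper's proof'' to compare against. You have instead sketched the classical proof of the G\"artner--Ellis theorem (Chernoff/Markov inequality for the upper bound, exponential tilting plus a weak law under the tilted measures for the lower bound), which is indeed the standard route and whose two halves you have organised correctly. The upper-bound half is fine: on $\mathbb R$ one may split a closed set at $z^\star=\Lambda'(0)$ and use the monotonicity of the convex function $I$ on each ray, and the finiteness of $\Lambda$ on all of $\mathbb R$ supplies the exponential tightness implicitly needed. The tilting computation $\tilde\Lambda(\mu)=\Lambda(\mu+\theta)-\Lambda(\theta)$ and the observation that $\tilde\Lambda'(0)=z$ drives concentration of $Z_n$ under $\tilde\prob_n$ is also correct, and works even where $\Lambda$ is locally affine, because $\mu\epsilon-o(\mu)>0$ for small $\mu$.

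The one genuinely soft spot is the final paragraph. The phrase ``finite boundary values are recovered by the standard exposed-point limit $\theta\to\pm\infty$'' is not really an argument. The clean way to close the gap is this: the range of $\Lambda'$ is an open interval $(a,b)$ whose closure contains the effective domain $D_I$ of $I$, so the only problematic points of $D_I$ are its endpoints. Since $I$ is a supremum of affine functions it is convex and lower semicontinuous; a convex, lower semicontinuous function of one variable is continuous on the closure of its effective domain approached from the interior. Hence if an open $G$ meets such an endpoint $z_0$ with $I(z_0)<\infty$, it also meets the interior arbitrarily close to $z_0$, and the lower bound already established there converges to $-I(z_0)$. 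Points with $I=\infty$ contribute nothing. Spelling this out replaces the hand-waving about exposed points with a two-line continuity argument and would make the sketch a complete proof.
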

\medskip

To calculate the moment generating function $M_n(\theta)$ of $Z_n$ we use Theorem~\ref{grossteo} and Proposition \ref{unique} in the third equality and condition~\eqref{mpaconditionsc} in the fifth one, to get
\begin{align*}
   M_n(\theta) & =  \E\big[e^{n\theta Z_n}\big] =  \E\Big[\exp\big(\theta\sum_{k=1}^n\xi_k\big)\Big]\\
   &  = \sum_{\eta\in\{0,1\}^n}\bar\nu_{c}^\alpha\{\xi:\xi_k=\eta_k\text{ for }k\leq n\}\exp\left(\theta\sum_{k=1}^n\eta_k\right)\\
   & = \frac{w^{\sf T}(e^\theta D+E)^nv}{ w^{\sf T}(D+E)^nv} = \frac{c^n}{w^{\sf T}v} w^{\sf T}(e^\theta D+E)^nv,
  \end{align*}
and the cumulant generating function simplifies to
\begin{align} 
\Lambda(\theta) &  =  \lim_{n\rightarrow\infty}\tfrac{1}{n}\log M_n(\theta) \notag\\
& = \lim_{n\rightarrow\infty}\tfrac{1}{n}\log w^{\sf T}(e^\theta D+E)^nv +\log c.
 \label{shortlambda}\end{align}
If $D$ and $E$ were finite matrices, we could identify this limit using the Perron-Frobenius theorem as the spectral radius of the matrix
$e^\theta D+E$. However in our example (and in almost all physically interesting examples) the matrices solving~\eqref{mpaconditions} are 
necessarily infinite. A first idea would be to truncate the matrices to finite size, calculate the spectral radius and taking a limit, 
but this turns out to lead to a wrong result, as it neglects the important information contained in the vectors $v$ and $w$.
\medskip

\pagebreak[3]

We will look at lower and upper bounds in~\eqref{shortlambda} separately. For the \emph{upper bound} we exploit that matrices $D$ and $E$, as well as the vectors $v$ and $w$, solving~\eqref{mpaconditions} are explicitly known. We introduce weighted $\ell^2$ spaces, denoted $\ell^2_s$, and interpret the matrix $e^\theta D+E$ as an operator on these spaces. If the weights are such that $v$ is an element of $\ell^2_s$, and $w^{\sf T}$ an element of its dual, we can get a bound on~\eqref{shortlambda} from the spectral radius of the operator, which can be optimised by minimising the bound over all admissible weights. In order to obtain the spectral radius we use a simple isomorphism between weighted and unweighted $\ell^2$ spaces. Acting on the unweighted spaces, the operator has a Toeplitz structure and from the general theory of Toeplitz operators on $\ell^2$ an explicit formula for the spectral radius is available. This estimate will be carried out in detail in Section~4. \medskip

The technique for the \emph{lower bound} only relies on the structure of the equations~\eqref{mpaconditions}. These provide an algorithmic way to reduce arbitrary products of $D$ and $E$ to linear combinations of monomials of the form $E^{p-j}D^j$ with $0\leq j\leq p$. We expand the product $(e^\theta D+E)^n$ into a polynomial with nonnegative coefficients $f^n_{p,j}(\theta)$, and use~\eqref{mpaconditions} to derive a recursion formula for the coefficients. While it seems to be too complicated to fully resolve this recursion, we focus on selected key terms which can be derived explicitly. Note that for a lower bound we can drop all inaccessible terms in the expansion. In our case, to obtain the growth rate it suffices to include the fastest growing terms \smash{$f^n_{p,p} D^p$} and \smash{$f^n_{p,0} E^p$} corresponding to all summands in the product which can be reduced to monomials of just one variable. It is quite a typical phenomenon that only a small number of boundary summands contribute to the growth of the sum, and that these coefficients can be identified without solving the entire system of equations.  This estimate will be carried out explicitly in Section~5.  \medskip

%Upper bound
%%%%%%%%%%%%%%%%%%%%%%%%%%%%%%%%%%%%%%%%%%
%%%%%%%%%%%%%%%%%%%%%%%%%%%%%%%%%%%%%%%%%%
%%%%%%%%%%%%%%%%%%%%%%%%%%%%%%%%%%%%%%%%%%
\section{Upper bound: Spectral theory of Toeplitz operators}
\subsection{Weighted $l^2$ spaces}

To find an upper bound for the cumulant generating function we consider the weighted spaces \[\ell^2_s=\{x=(x_k)_{k\in\mathbb N}\colon \sum_{k=1}^\infty |x_k|^2s^k<\infty\} \] with $s>0$. Note that imposing $s=1$ recovers the usual $\ell^2$ space. Moreover for fixed $s$, $\ell^2_s$ with its corresponding norm \[|x|_{\ell^2_s}^2=\sum_{k=1}^\infty|x_k|^2s^k\] is a Banach space.
The next lemma will help us to translate classic $\ell^2$ theory to $\ell^2_s$.

\begin{lem}\label{transform} The function $T_s\colon\ell^2\rightarrow\ell^2_s$ defined by \[(T_sx)_k=\frac{x_k}{s^{k/2}}\] for $s>0$ is a bijective isometry.\end{lem}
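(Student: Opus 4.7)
The plan is to verify directly from the definitions that $T_s$ is linear, norm-preserving, and surjective, all of which reduce to the single algebraic observation that the factor $s^{k/2}$ in the definition of $T_s$ exactly cancels the weight $s^k$ in the norm of $\ell^2_s$.

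First I would check the isometry property: for any $x\in\ell^2$,
\[
|T_sx|_{\ell^2_s}^2 = \sum_{k=1}^\infty \left|\frac{x_k}{s^{k/2}}\right|^2 s^k = \sum_{k=1}^\infty |x_k|^2 = |x|_{\ell^2}^2.
\]
This single identity simultaneously shows that $T_s$ sends $\ell^2$ into $\ell^2_s$ (the series on the right is finite by hypothesis), that $T_s$ preserves norms, and, combined with the manifest componentwise linearity of $T_s$, that $T_s$ is injective (since $T_sx=0$ forces $|x|_{\ell^2}=0$).

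For surjectivity, given $y\in\ell^2_s$ I would define $x=(x_k)_{k\in\mathbb N}$ by $x_k=s^{k/2}y_k$. The same computation in reverse yields $|x|_{\ell^2}^2=|y|_{\ell^2_s}^2<\infty$, so $x\in\ell^2$, and by construction $(T_sx)_k=y_k$ for all $k$.

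There is no real obstacle here: the only minor point worth flagging is the hypothesis $s>0$, which is needed to ensure that the real powers $s^{k/2}$ are well-defined and nonzero, so that the inverse assignment $y_k\mapsto s^{k/2}y_k$ makes sense. The purpose of the lemma is presumably purely instrumental: in Section 4 it will allow statements about Toeplitz operators on the standard space $\ell^2$ (where the classical symbol calculus and spectral radius formula are available) to be transferred to the weighted spaces $\ell^2_s$ in which the vectors $v$ and $w^{\sf T}$ from the matrix product ansatz live, via the similarity transform $A\mapsto T_s^{-1}AT_s$.
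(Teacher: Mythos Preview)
Your proof is correct and follows essentially the same approach as the paper: both verify the isometry identity $|T_sx|_{\ell^2_s}=|x|_{\ell^2}$ by direct cancellation of the weight, and both exhibit the explicit inverse $y_k\mapsto s^{k/2}y_k$ to establish bijectivity. Your version is slightly more explicit in separating well-definedness, injectivity, and surjectivity, but the underlying argument is identical.
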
 
\begin{proof}
 We can define the inverse $T_s^{-1}\colon\ell^2_s\rightarrow\ell^2$ by $(T_s^{-1}x)_k=x_ks^{k/2}$ 
 and hence $T_s$ is bijective. We just need to prove it is an isometry, so let $x\in\ell^2_s$ and calculate 
\[|T_s^{-1}x|^2_{\ell^2}  =  \sum_{k=1}^\infty |(T_s^{-1}x)_k|^2 = 
\sum_{k=1}^\infty|x_ks^{k/2}|^2  = |x|_{\ell^2_s}^2.\]
Analogously, for $x\in\ell^2$ we have $|T_sx|_{\ell^2_s}=|x|_{\ell^2}$.
\end{proof}
\pagebreak[3]

\begin{lem} The dual space $\ell^{2\ast}_s$ can be identified with $\ell^2_{s^{-1}}$. \end{lem}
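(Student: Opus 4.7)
The plan is to produce the identification via the isometry $T_s$ from Lemma~3.1 combined with the Riesz representation theorem for $\ell^2$, which makes the proof essentially a transport of structure.

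First I would define the natural pairing $\langle x, y\rangle := \sum_{k=1}^\infty x_k y_k$ for $x\in\ell^2_s$ and $y\in\ell^2_{s^{-1}}$. A Cauchy--Schwarz estimate written with the weights split as $s^{k/2}\cdot s^{-k/2}$ gives
\[
\Big|\sum_{k=1}^\infty x_k y_k\Big| \le \Big(\sum_{k=1}^\infty |x_k|^2 s^k\Big)^{1/2}\Big(\sum_{k=1}^\infty |y_k|^2 s^{-k}\Big)^{1/2} = |x|_{\ell^2_s}\,|y|_{\ell^2_{s^{-1}}},
\]
so each $y\in\ell^2_{s^{-1}}$ induces a bounded linear functional $\varphi_y$ on $\ell^2_s$ with $\|\varphi_y\|\le |y|_{\ell^2_{s^{-1}}}$. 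This gives a norm-decreasing linear map $\ell^2_{s^{-1}}\to\ell^{2\ast}_s$.

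Next I would show this map is an isometric isomorphism by passing through $T_s$. Given any $\varphi\in\ell^{2\ast}_s$, the composition $\varphi\circ T_s$ lies in $\ell^{2\ast}$ since $T_s$ is an isometry by Lemma~3.1, so by the Riesz representation theorem on $\ell^2$ there exists a unique $z\in\ell^2$ with $(\varphi\circ T_s)(x)=\sum_{k=1}^\infty x_k z_k$ and $|z|_{\ell^2}=\|\varphi\circ T_s\|=\|\varphi\|$. Writing an arbitrary $y\in\ell^2_s$ as $y=T_s x$, i.e.\ $x_k=y_k s^{k/2}$, yields
\[
\varphi(y)=\sum_{k=1}^\infty y_k s^{k/2} z_k = \sum_{k=1}^\infty y_k w_k, \qquad w_k:=s^{k/2}z_k.
\]
A direct calculation gives $\sum_{k=1}^\infty |w_k|^2 s^{-k}=\sum_{k=1}^\infty |z_k|^2=|z|^2_{\ell^2}$, so $w\in\ell^2_{s^{-1}}$ with $|w|_{\ell^2_{s^{-1}}}=|z|_{\ell^2}=\|\varphi\|$. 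Thus every $\varphi$ is of the form $\varphi_w$, and the correspondence $y\leftrightarrow \varphi_y$ is an isometric isomorphism $\ell^2_{s^{-1}}\cong \ell^{2\ast}_s$.

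There is no real obstacle here: the proof is purely a change-of-variables argument that reduces everything to the self-duality of $\ell^2$ via the explicit isometry $T_s$, and uniqueness of the representing element follows from the uniqueness clause in Riesz's theorem.
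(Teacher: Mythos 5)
Your proof is correct and follows essentially the same route as the paper: the forward direction via the weight-split Cauchy--Schwarz bound, and the converse by transporting $\varphi$ to $\ell^{2\ast}$ through $T_s$, invoking Riesz, and pulling the representing vector back via $T_{s^{-1}}$ (your $w_k=s^{k/2}z_k$ is exactly $(T_{s^{-1}}z)_k$). The only cosmetic difference is that you make the isometric nature of the identification explicit, which the paper leaves implicit.
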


\begin{proof}
Define the dual product $\langle\cdot,\cdot\rangle_D\colon\ell^2_{s^{-1}}\times\ell^2_s \to \R$ by
$\langle y,x\rangle_D =\langle T_{s^{-1}}^{-1} y,T_{s}^{-1}x\rangle$,
where $\langle\cdot,\cdot\rangle$ is the usual inner product in $\ell^2$.
We first prove that for each vector $y\in\ell^2_{s^{-1}}$ there exists a function $f_y\in\ell^{2\ast}_s$ such that $f_y(x)=\langle y,x\rangle_D$.
To this end, let $y\in\ell^2_{s^{-1}}$ and define $f_y\colon\ell^2_s\to\mathbb R$ by \[f_y(x)=\langle y,x\rangle_D=\sum_{k\in\mathbb N}x_ky_k.\]
The linearity of $f_y$ follows easily from the definition; the Cauchy-Schwarz inequality in $\ell^2$ shows it is also bounded,
\begin{align*}
|f_y(x)| & = |\sum_{k\in\mathbb N}x_ky_k|
 = |\sum_{k\in\mathbb N}x_ks^{\tfrac{k}{2}}y_ks^{-\tfrac{k}{2}}|\\
& \leq \left(\sum_{k\in\mathbb N}|x_k|^2s^k \right)^{\tfrac{1}{2}} \left(\sum_{k\in\mathbb N}|y_k|^2s^{-k} \right)^{\tfrac{1}{2}}
 = |x|_{\ell^2_s}  |y|_{\ell^2_{s^{-1}}}.
\end{align*}
Conversely, let $f\in\ell^{2\ast}_s$. Define $g\colon\ell^2\to\mathbb R$ by $g(x)=(f\circ T_s)(x)$. Since $f$ and $T_s$ are both linear, so is $g$, and since $f$ is bounded,
\[|g(x)|=|(f\circ T_s)(x)|\leq |f|_{\ell_s^{2\ast}}|T_s(x)|_{\ell^2_s}=|f|_{\ell_s^{2\ast}}|x|_{\ell^2}<\infty.\]
Hence, $g\in\ell^{2\ast}$ and by the Riesz Representation theorem there exists a unique $\tilde y\in\ell^2$ such that 
$g(x)=\langle x,\tilde y\rangle$ for all $x\in\ell^2$.
Let $y=T_{s^{-1}}\tilde y\in\ell^2_{s^{-1}}$. Since $T_s$ is invertible we have that for all $x\in\ell^2_s$
\begin{align*}
f(x)=(g\circ T^{-1}_s)(x)  = \langle T^{-1}_sx,\tilde y\rangle = \sum_{k\in\mathbb N}(T^{-1}_sx)_k(T_{s^{-1}}^{-1}y)_k = \sum_{k\in\mathbb N}x_ky_k=\langle y,x\rangle_D,
\end{align*}
whence $f\in\ell^{2\ast}_s$ is represented by $y\in\ell^2_{s^{-1}}$.
\end{proof}
\medskip

We now need an explicit solution for~\eqref{mpaconditions}, so we first define the values
\begin{align}\label{eigen}
\lambda_1 & = \frac{1-2c+\sqrt{1-4c}}{2c},\\
\lambda_2 & = \frac{1-2c-\sqrt{1-4c}}{2c}.
\end{align} 
Elementary calculations show that the matrices $D$, $E$ and the vectors 
$v$ and $w$ defined by
\begin{subequations}
\label{explicit}
\begin{equation}
  D=
\begin{pmatrix}
1 & 1 & 0 & 0 & \cdots \\
0 & 1 & 1 & 0 & \cdots \\
0 & 0 & 1 & 1 & \cdots \\
0 & 0 & 0 & 1 & \ddots \\
\vdots & \vdots & \vdots & \vdots & \ddots \\ 
\end{pmatrix},
\qquad
E=
\begin{pmatrix}
1 & 0 & 0 & 0 & \cdots \\
1 & 1 & 0 & 0 & \cdots \\
0 & 1 & 1 & 0 & \cdots \\
0 & 0 & 1 & 1 & \ddots \\
\vdots & \vdots & \vdots & \vdots & \ddots \\ 
\end{pmatrix},
\end{equation}
\begin{equation}
  w^{\sf T}=
\begin{pmatrix}
1,\frac{1}{\alpha}-1,\left(\frac{1}{\alpha}-1\right)^2,\cdots
\end{pmatrix}
\text{ and }v=\frac{1}{\lambda_1-\lambda_2}
 \begin{pmatrix}
   \lambda_1-\lambda_2\\
 \lambda_1^2-\lambda_2^2\\
 \lambda_1^3-\lambda_2^3\\
 \vdots \\
  \end{pmatrix}
\end{equation}
\end{subequations}
satisfy the matrix product conditions~\eqref{mpaconditions}. 
In the boundary case $c=\frac14$ we have
$\lambda_1=1=\lambda_2$ and we take $v^{\sf T}=(1,2,3,\ldots)$.
\medskip

To simplify notation, define for fixed $\theta\in\mathbb R$
the operator $A(\theta)\colon\ell^2_s\to\ell^2_s$ with the infinite matrix representation 
\begin{equation}\label{toe}A(\theta)=e^\theta D+E=
\begin{pmatrix}
1+e^\theta & e^\theta & 0 & 0 & \cdots \\
1 & 1+e^\theta & e^\theta & 0 & \cdots \\
0 & 1 & 1+e^\theta & e^\theta & \cdots \\
0 & 0 & 1 & 1+e^\theta & \ddots \\
\vdots & \vdots & \vdots & \ddots & \ddots \\ 
\end{pmatrix}\end{equation}
and then the $k$-th component of the vector $A(\theta)x$ is
\[(A(\theta)x)_k=
\begin{cases}x_1(1+e^\theta)+x_2e^\theta & \text{if }k=1 \\
x_{k-1}+ x_k(1+e^\theta)+x_{k+1}e^\theta  & \text{if }k>1.  \end{cases}\]
\medskip

\begin{prop}\label{abounded} The operator $A(\theta)\colon\ell^2_s\rightarrow\ell^2_s$ is bounded. \end{prop}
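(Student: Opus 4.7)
The plan is to decompose $A(\theta) = e^{\theta} D + E$ and show that each of the operators $D$ and $E$ is bounded on $\ell^2_s$ by a direct norm estimate; boundedness of $A(\theta)$ then follows by the triangle inequality. Since $e^{\theta}$ is a fixed positive constant, the constant will depend on $\theta$ and $s$ but not on the vector $x$.

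For the operator $D$, I would compute from \eqref{explicit} that $(Dx)_k = x_k + x_{k+1}$. Then, using $(a+b)^2 \leq 2a^2 + 2b^2$, I get
\begin{align*}
|Dx|^2_{\ell^2_s} = \sum_{k=1}^{\infty} |x_k + x_{k+1}|^2 s^k
&\leq 2\sum_{k=1}^{\infty} |x_k|^2 s^k + 2 s^{-1} \sum_{k=1}^{\infty} |x_{k+1}|^2 s^{k+1}\\
&\leq 2\bigl(1 + s^{-1}\bigr)|x|^2_{\ell^2_s}.
\end{align*}
Analogously, from the shape of $E$ in \eqref{explicit} one has $(Ex)_1 = x_1$ and $(Ex)_k = x_{k-1}+x_k$ for $k>1$, so that
\[
|Ex|^2_{\ell^2_s} \leq |x_1|^2 s + 2\sum_{k=2}^{\infty}\bigl(|x_{k-1}|^2 + |x_k|^2\bigr)s^k \leq (3 + 2s)\,|x|^2_{\ell^2_s}.
\]
Combining these two estimates gives $|A(\theta) x|_{\ell^2_s} \leq \bigl(e^\theta \sqrt{2(1+s^{-1})} + \sqrt{3+2s}\bigr)\,|x|_{\ell^2_s}$, which proves that $A(\theta)$ is a bounded operator on $\ell^2_s$.

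There is no real obstacle here: once one sees that $D$ and $E$ are band matrices with uniformly bounded entries, the weighted $\ell^2_s$ norm tolerates the one-sided shifts at the modest cost of a factor $s^{\pm 1}$. An equivalent route, which I would mention as a sanity check, is to conjugate $A(\theta)$ by the isometry $T_s$ of Lemma \ref{transform}; the resulting operator $T_s^{-1} A(\theta) T_s$ on $\ell^2$ is a tridiagonal Toeplitz operator (up to a single boundary entry) with constant diagonals $s^{1/2}$, $1+e^\theta$ and $e^\theta s^{-1/2}$, hence manifestly bounded. This perspective is the one that will matter in Section 4 when the spectral radius of $A(\theta)$ is computed via Toeplitz theory, so the present lemma really serves as a preliminary that legitimates viewing $A(\theta)$ as a bona fide operator on $\ell^2_s$.
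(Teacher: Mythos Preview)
Your proof is correct and follows essentially the same approach as the paper: a direct norm estimate exploiting that the matrix is banded, using the elementary inequality $(a+b)^2\le 2(a^2+b^2)$ (the paper phrases this as Cauchy--Schwarz in $\mathbb{R}^2$ and $\mathbb{R}^3$). The only cosmetic difference is that you split $A(\theta)$ into $D$ and $E$ and bound each separately, whereas the paper bounds $A(\theta)$ in one go; the underlying argument is identical.
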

\begin{proof}
 Let $x\in\ell^2_s$. Using Cauchy-Schwarz in $\mathbb R^2$ and $\mathbb R^3$ for each term of~$A(\theta)x$ gives
\begin{align*}
   |A(\theta)x|^2_{\ell^2_s} & =  \sum_{k=1}^n|(A(\theta)x)_k|^2s^k \\
& =  |x_1(1+e^\theta)+x_2e^\theta|^2s + \sum_{k=2}^n|x_{k-1}+ x_k(1+e^\theta)+x_{k+1}e^\theta |^2s^k \\
& \leq  (x_1^2+x_2^2)((1+e^\theta)^2+e^{2\theta}) + \sum_{k=2}^n(x_{k-1}^2+ x_k^2+x_{k+1}^2)(1+(1+e^\theta)^2+e^{2\theta})s^k \\
& \leq  C_s|x|^2_{\ell^2_s},
%& \leq  2(1+e^\theta+e^{2\theta})(1+s+2s^{-1}+s^{-2})|x|^2_{\ell^2_s},
\end{align*}
where $C_s>0$ is a constant independent of $x$ and hence we see that $A(\theta)$ is a bounded linear operator.
\end{proof}
\medskip

\begin{lem}\label{opnorm} Let $L\in\mathcal L(\ell^2_s)$, that is a bounded linear operator from $\ell^2_s$ to itself. 
The operator $\tilde L = T^{-1}_s\circ L\circ T_s$ satisfies $\tilde L\in\mathcal L(\ell^2)$.\end{lem}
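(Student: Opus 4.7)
The plan is to observe that this is essentially a conjugation statement: since $T_s$ and $T_s^{-1}$ are bijective isometries by the earlier Lemma~\ref{transform} (with $T_s^{-1}$ being an isometry from $\ell^2_s$ to $\ell^2$ by the same argument applied in reverse), the composition $\tilde L = T_s^{-1}\circ L\circ T_s$ is well-defined as a map $\ell^2\to\ell^2$, and its operator norm is inherited from that of $L$.

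First I would verify linearity of $\tilde L$. Each of the three maps $T_s$, $L$, and $T_s^{-1}$ is linear, so their composition is linear as a map $\ell^2\to\ell^2$. This is immediate and requires no calculation.

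Next I would establish boundedness by chaining three norm estimates. For any $x\in\ell^2$,
\[
|\tilde L x|_{\ell^2} \;=\; |T_s^{-1}(L(T_s x))|_{\ell^2} \;=\; |L(T_s x)|_{\ell^2_s} \;\leq\; \|L\|\,|T_s x|_{\ell^2_s} \;=\; \|L\|\,|x|_{\ell^2},
\]
where the second and fourth equalities use that $T_s^{-1}$ and $T_s$ are isometries between the corresponding spaces, and the inequality uses that $L\in\mathcal L(\ell^2_s)$. This shows $\tilde L\in\mathcal L(\ell^2)$ with operator norm at most $\|L\|$.

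There is no real obstacle here; the only thing to be slightly careful about is making sure $T_s^{-1}$ is correctly interpreted as an isometry $\ell^2_s\to\ell^2$, which is exactly the content of the proof of Lemma~\ref{transform}. A small bonus observation one could record for later use is that the reverse bound also holds, $\|L\|\leq\|\tilde L\|$, by applying the same argument to $L=T_s\circ\tilde L\circ T_s^{-1}$, so in fact $\|\tilde L\|_{\mathcal L(\ell^2)}=\|L\|_{\mathcal L(\ell^2_s)}$. This will be useful later when transferring spectral information from the Toeplitz operator on $\ell^2$ back to $A(\theta)$ on $\ell^2_s$.
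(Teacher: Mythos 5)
Your proof is correct and follows essentially the same route as the paper: both chain the three maps, use that $T_s$ and $T_s^{-1}$ are isometries (Lemma~\ref{transform}) to bound $|\tilde L x|_{\ell^2}$ by $\|L\|\,|x|_{\ell^2}$, and then note the reverse inequality via $L = T_s\circ\tilde L\circ T_s^{-1}$ to conclude $\|\tilde L\|_{\mathcal L(\ell^2)} = \|L\|_{\mathcal L(\ell^2_s)}$.
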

\begin{proof}
 Take $x\in\ell^2$. Then by Lemma~\ref{transform},
 \[|\tilde L x|_{\ell^2}\leq|T^{-1}_s|_{\mathcal L(\ell^2_s,\ell^2)} |L|_{\mathcal L(\ell^2_s)} |T_s|_{\mathcal L(\ell^2,\ell^2_s)}|x|_{\ell^2}<\infty.\]
 By Lemma~\ref{transform} we conclude that $|\tilde L|_{\mathcal L(\ell^2)}\leq|L|_{\mathcal L(\ell^2_s)}$. Analogously, since
 $L=T_s\circ L\circ T^{-1}_s$, we have that $|\tilde L|_{\mathcal L(\ell^2)}=|L|_{\mathcal L(\ell^2_s)}.$
 \end{proof}
 \medskip
 
The tilde operator commutes with exponentiation.

\begin{lem}\label{commute} Let $L\in\mathcal L(\ell^2_s)$, then $\widetilde{L^n}=\tilde L^n$. \end{lem}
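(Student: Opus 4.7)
The plan is to verify the identity $\widetilde{L^n}=\tilde L^n$ by a direct calculation using the fact, proved in Lemma~\ref{transform}, that $T_s\colon \ell^2\to\ell^2_s$ is a bijection with inverse $T_s^{-1}$, so $T_s\circ T_s^{-1}$ and $T_s^{-1}\circ T_s$ are the respective identity operators on $\ell^2_s$ and $\ell^2$. Since $L\in\mathcal L(\ell^2_s)$, iterated composition is unambiguous and we only need to manipulate the conjugation formula algebraically.

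The cleanest route is induction on $n$. For $n=1$ the identity $\widetilde{L^1}=\tilde L^1$ is just the definition of $\tilde L$. For the inductive step, assume $\widetilde{L^n}=\tilde L^n$. Then, writing out the definition of the tilde operation applied to $L^{n+1}=L^n\circ L$,
\[
\widetilde{L^{n+1}} = T_s^{-1}\circ L^{n+1}\circ T_s = T_s^{-1}\circ L^n\circ (T_s\circ T_s^{-1})\circ L\circ T_s = (T_s^{-1}\circ L^n\circ T_s)\circ(T_s^{-1}\circ L\circ T_s)= \widetilde{L^n}\circ \tilde L,
\]
and by the inductive hypothesis this equals $\tilde L^n\circ \tilde L=\tilde L^{n+1}$. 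Alternatively one can just expand $\tilde L^n=(T_s^{-1}\circ L\circ T_s)^n$ as an $n$-fold composition and cancel the $n-1$ pairs $T_s\circ T_s^{-1}$ appearing between consecutive factors, arriving at $T_s^{-1}\circ L^n\circ T_s=\widetilde{L^n}$ in one line.

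There is essentially no obstacle here: the proof reduces to associativity of operator composition together with the invertibility of $T_s$ established in Lemma~\ref{transform}. The only minor point worth noting is that all compositions are well-defined as bounded operators between the appropriate spaces, which follows from Proposition~\ref{abounded} and Lemma~\ref{opnorm} (applied inductively to $L^n$), so no convergence issues arise.
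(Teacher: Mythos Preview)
Your proof is correct and follows essentially the same inductive argument as the paper: both establish the base case trivially and, for the step, insert the identity $T_s\circ T_s^{-1}$ between factors to split $T_s^{-1}\circ L^{n+1}\circ T_s$ into the product of $\widetilde{L^n}$ and $\tilde L$. The only (inessential) difference is that the paper composes on the left rather than on the right; your closing remark about boundedness is slightly superfluous since $L\in\mathcal L(\ell^2_s)$ is already assumed, so Proposition~\ref{abounded} is not needed here.
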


\begin{proof}
 We proceed by induction over $n$. For $n=1$, the proposition is a tautology. We assume the proposition true for $n$, let $x\in\ell^2$ and calculate
 \begin{align*}
 \tilde L^{n+1}x & = \tilde L\circ\tilde L^nx = T^{-1}_s\circ L\circ T_s\circ T^{-1}_s\circ L^n\circ T_s x
= T^{-1}_s\circ L^{n+1} \circ T_s x = \widetilde{L^{n+1}}x.
 \end{align*}
 \end{proof}

Recall from~\eqref{explicit} the explicit form of $w$ and $v$ and note that if $s\in(0,1)$
\begin{equation}|v|_{\ell^2_s}^2=\sum_{k\in\mathbb N}k^2s^k=\frac{s(1+s)}{(1-s)^3}<\infty.\label{vinl2s}\end{equation}
On the other hand, if $s>\left(\frac{1}{\alpha}-1\right)^2$,
\begin{equation}|w|_{\ell^2_{s^{-1}}}^2=\sum_{k\in\mathbb N}\left(\frac{1}{\alpha}-1\right)^{2(k-1)}s^{-k}=\frac{1}{s-(\tfrac{1}{\alpha}-1)^2}<\infty.\label{winel2sstar}\end{equation}
Therefore, for $s\in\left((\tfrac{1}{\alpha}-1)^2,1\right)$ we have that $v\in\ell^2_s$ and $w\in\ell^2_{s^{-1}}$.

\subsection{Toeplitz operators}

Before stating the main result of this section, we need to review some properties of Toeplitz operators.
Let $a=\{a_k\}_{k\in\mathbb Z}\in\ell^2(\mathbb C)$, that is, a double sequence of complex numbers such that $\sum_{k\in\mathbb Z}|a_k|^2<\infty$.
A Toeplitz operator $A$ defined by the double sequence $a\in\ell^2(\mathbb C)$ is an infinite matrix with the structure
\[A=\begin{pmatrix}
a_0 & a_{-1} & a_{-2} & \cdots \\
a_1 & a_0    & a_{-1} & \cdots \\
a_2 & a_1    & a_0    & \cdots \\
\vdots & \vdots & \ddots & \ddots \\
\end{pmatrix}.\]

The \emph{symbol} $\kappa\colon\{z\in\mathbb C:|z|= 1\}\to\mathbb C$ of a Toeplitz operator is defined by \[\kappa(z)=\sum_{k\in\mathbb Z}a_kz^k.\]
We recall Theorem 7.1 in \cite{spectra} that deals with spectra of Toeplitz operators.

\begin{teo}\cite[Theorem 7.1]{spectra}\label{spectrum}
Let $A$ be a Toeplitz operator. If $A$ has a continuous symbol $\kappa$, then its spectrum is given by the image of the unit circle under $\kappa$ together with all the points enclosed by this curve with non-zero winding number. \end{teo}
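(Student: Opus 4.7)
The plan is to characterise invertibility of Toeplitz operators with continuous symbol via a Fredholm criterion combined with a winding-number index formula, and then apply the characterisation to $A-\lambda I=A_{\kappa-\lambda}$ for every $\lambda\in\mathbb C$. Since $\lambda\in\sigma(A)$ exactly when $A_{\kappa-\lambda}$ fails to be invertible, and the symbol of $A-\lambda I$ is still continuous, the entire statement reduces to an invertibility criterion for a single Toeplitz operator with continuous symbol.

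The cornerstone is a \emph{semi-commutator identity}: for continuous symbols $\kappa_1,\kappa_2$ the difference $A_{\kappa_1}A_{\kappa_2}-A_{\kappa_1\kappa_2}$ is compact on $\ell^2$. I would prove this first for trigonometric polynomials by a direct matrix calculation---the difference has finite rank, coming from the ``corner'' entries lost when projecting the bi-infinite Laurent convolution operator onto the index set $\mathbb N$---and then extend to general continuous symbols by density, using the easy bound $\|A_\kappa\|\leq\|\kappa\|_\infty$. Equivalently, $\kappa\mapsto A_\kappa+\mathcal K(\ell^2)$ descends to a $\ast$-homomorphism from the algebra of continuous functions on the unit circle into the Calkin algebra $\mathcal B(\ell^2)/\mathcal K(\ell^2)$. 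As an immediate corollary, $A_\kappa$ is Fredholm if and only if $\kappa$ is invertible in $C(\{|z|=1\})$, i.e.~never vanishes on the unit circle.

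Next I would establish the \emph{index formula} $\operatorname{ind}A_\kappa=-\mathrm{wind}(\kappa,0)$. This is checked by inspection for the model symbols $\kappa(z)=z^n$, where $A_\kappa$ is a power of the unilateral shift whose kernel or cokernel has dimension $|n|$, and then extended by homotopy: both sides are locally constant integer-valued functions on the open set of non-vanishing continuous symbols, and every non-vanishing symbol is homotopic through such symbols to $z^{\mathrm{wind}(\kappa,0)}$. The two ingredients needed here are invariance of the Fredholm index under norm-continuous deformations of Fredholm operators, and the classification of path components of the space of non-vanishing continuous functions on the unit circle by winding number.

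Combining the two, $A_{\kappa-\lambda}$ is invertible if and only if it is Fredholm of index zero, if and only if $\kappa-\lambda$ has no zeros on the unit circle \emph{and} zero winding number about the origin. Taking complements identifies $\sigma(A)$ as the union of $\kappa(\{|z|=1\})$ with the set of $\lambda$ enclosed by that image with $\mathrm{wind}(\kappa,\lambda)\neq 0$, which is exactly the assertion. I expect the homotopy step in the index computation to be the main technical obstacle: the abstract framework is clean, but one must carefully justify both the homotopy invariance of the Fredholm index and the identification of path components of $C(\{|z|=1\},\mathbb C\setminus\{0\})$ with $\mathbb Z$ via the winding number.
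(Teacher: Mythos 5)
The paper does not prove this theorem; it is quoted verbatim from the cited reference (Theorem~7.1 in Böttcher--Grudsky), so there is no internal proof against which to compare. Judged on its own terms, your outline follows the standard Gohberg--Krein route and the first two ingredients are correct and well presented: the Hartman--Wintner compactness of the semi-commutator $A_{\kappa_1}A_{\kappa_2}-A_{\kappa_1\kappa_2}$, giving the Fredholm criterion (symbol non-vanishing), and the index formula $\operatorname{ind}A_\kappa=-\mathrm{wind}(\kappa,0)$ via the model symbols $z^n$ together with homotopy invariance of the index.

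There is, however, a genuine gap at the final step. You assert that $A_{\kappa-\lambda}$ \emph{is invertible if and only if it is Fredholm of index zero}. The forward direction is trivial, but the reverse is false for general bounded operators: a Fredholm operator of index zero may have a nontrivial kernel and a cokernel of equal dimension. For Toeplitz operators the implication does hold, but it requires an extra and non-obvious ingredient, namely Coburn's lemma: if $\varphi\not\equiv 0$ then at least one of $\ker A_\varphi$ and $\ker A_\varphi^*=\ker A_{\bar\varphi}$ is trivial. Its proof is an argument specific to Hardy spaces --- if $f\in\ker A_\varphi$ and $g\in\ker A_{\bar\varphi}$ are both nonzero, then $\varphi f\in\overline{H^2_0}$ and $\overline{\varphi}g\in\overline{H^2_0}$, and the product $f\bar g\varphi$ lies simultaneously in $H^1$ and in $\overline{H^1_0}$, hence vanishes, which by the F.~and M.~Riesz theorem forces $\varphi=0$ a.e. Without Coburn's lemma your chain of equivalences only proves that $\sigma(A)$ is \emph{contained in} the claimed set, not that it equals it, since points $\lambda$ with $\kappa-\lambda$ non-vanishing and winding number zero could a priori still be in the spectrum. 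Inserting Coburn's lemma before the last equivalence closes the argument.
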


For fixed $\theta\in\mathbb R$, the operator $A(\theta)$ defined by~\eqref{toe} is by Proposition~\ref{abounded} in $\mathcal L(\ell^2_s)$. By Lemma~\ref{opnorm} the operator $\tilde A(\theta)$ is a Toeplitz operator in $\ell^2$ with its symbol $\kappa$ given by
\[\kappa(\zeta)=\frac{e^\theta}{\zeta\sqrt{s}}+1+e^\theta+\zeta\sqrt{s}.\]
Writing $\zeta=e^{i\varphi}$ as an element of the unit circle,
\[\kappa(e^{i\varphi})= 1+e^\theta +\left(\sqrt{s}+\frac{e^\theta}{\sqrt{s}}\right)\cos\varphi+\left(\sqrt{s}-\frac{e^\theta}{\sqrt{s}}\right)i\sin\varphi,\]
which we recognise as a parametrised ellipse centred at $1+e^\theta$, with major axis of length $\sqrt{s}+\frac{e^\theta}{\sqrt{s}}$ along the real line, and minor axis of length $|\sqrt{s}-\frac{e^\theta}{\sqrt{s}}|$. Therefore, the spectral radius is found when $z=1$ and 
\begin{equation}
  \label{eq:spr}
  \rho(\tilde A(\theta))=\kappa(1)=1+\sqrt{s}+e^\theta\left(1+\frac{1}{\sqrt{s}}\right).
\end{equation}

We now state the main result of this section: the upper bound for the cumulant generating function $\Lambda$.

\begin{prop}\label{UB}For $\Lambda$ defined by~\eqref{shortlambda}, 
an upper bound is 
\begin{equation*}
   \Lambda(\theta)\leq
 \begin{cases}
 \log\left(\frac{e^\theta}{1-\alpha}+\frac{1}{\alpha}\right)+\log c & \text{if }  -\infty<\theta\leq 2\log\left(\frac{1}{\alpha}-1\right),\\[2mm]
 \log\left(1+e^{\theta/2}\right)^2+\log c & \text{if } 2\log\left(\frac{1}{\alpha}-1\right)<\theta\leq -2\log\lambda_1,\\[2mm]
 \log\left(1+\lambda_1e^{\theta}\right)+\log\left(1+\frac{1}{\lambda_1}\right)+\log c & \text{if } -2\log\lambda_1<\theta<\infty. 
 \end{cases}
\end{equation*}
\end{prop}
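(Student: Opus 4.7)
The plan is to chain together three ingredients that have already been assembled in this section, and then optimise over the weight $s$. Starting from the representation (\ref{shortlambda}), the duality pairing between $\ell^2_{s^{-1}}$ and $\ell^2_s$ together with the Cauchy--Schwarz estimate from the previous lemma give
\[
w^{\sf T}(e^\theta D+E)^n v \;\leq\; |w|_{\ell^2_{s^{-1}}}\, |A(\theta)^n v|_{\ell^2_s} \;\leq\; |w|_{\ell^2_{s^{-1}}}\, \|A(\theta)^n\|_{\mathcal L(\ell^2_s)}\, |v|_{\ell^2_s}.
\]
Lemmas \ref{opnorm} and \ref{commute} let me replace $\|A(\theta)^n\|_{\mathcal L(\ell^2_s)}$ by $\|\tilde A(\theta)^n\|_{\mathcal L(\ell^2)}$, after which Gelfand's spectral radius formula yields $\lim_{n\to\infty}\tfrac1n\log\|\tilde A(\theta)^n\|_{\mathcal L(\ell^2)} = \log \rho(\tilde A(\theta))$.

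The prefactors $|w|_{\ell^2_{s^{-1}}}$ and $|v|_{\ell^2_s}$ are finite but constant in $n$, so they vanish after dividing by $n$ and taking $n\to\infty$. Combined with (\ref{eq:spr}) this gives
\[
\Lambda(\theta) \;\leq\; \log\!\Bigl(1+\sqrt{s}+e^\theta\bigl(1+\tfrac{1}{\sqrt{s}}\bigr)\Bigr) + \log c
\]
for every admissible weight $s$. The admissible range is dictated by the requirement that $v\in\ell^2_s$ and $w\in\ell^2_{s^{-1}}$ simultaneously. The bound (\ref{winel2sstar}) forces $s>(\tfrac{1}{\alpha}-1)^2$, while the explicit form of $v$ in (\ref{explicit}) gives, via a geometric series analogous to (\ref{vinl2s}) with ratio $\lambda_1^2 s$, the condition $s<1/\lambda_1^2$. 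Thus $s$ must lie in the open interval $\bigl((\tfrac1\alpha-1)^2, 1/\lambda_1^2\bigr)$, which one checks is non-empty in all the parameter regimes of interest.

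The remaining step is to minimise the upper bound over this interval. Elementary calculus shows that $s\mapsto 1+\sqrt s+e^\theta(1+s^{-1/2})$ is strictly convex on $(0,\infty)$ with a unique minimum at $s^\ast = e^\theta$ of value $(1+e^{\theta/2})^2$. Depending on where $s^\ast$ sits relative to the admissible interval, the optimal admissible $s$ is either its lower endpoint, $s^\ast$ itself, or its upper endpoint: the three corresponding regimes of $\theta$ are exactly $\theta\leq 2\log(\tfrac{1}{\alpha}-1)$, $2\log(\tfrac{1}{\alpha}-1)<\theta\leq -2\log\lambda_1$, and $\theta>-2\log\lambda_1$. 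Substituting $s=(\tfrac1\alpha-1)^2$ into the bound yields $\log\bigl(\tfrac{1}{\alpha}+\tfrac{e^\theta}{1-\alpha}\bigr)$, substituting $s=e^\theta$ yields $\log(1+e^{\theta/2})^2$, and substituting $s=1/\lambda_1^2$ yields $\log(1+\tfrac{1}{\lambda_1})+\log(1+\lambda_1 e^\theta)$; adding $\log c$ in each case gives exactly the three branches claimed.

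The step I expect to require the most care is the admissibility analysis: for general $c$ the vector $v$ has entries of order $\lambda_1^k$ (rather than just $k$, as in the boundary case $c=\tfrac14$ treated in (\ref{vinl2s})), so I have to redo that geometric-series computation in the regime $c<\tfrac14$, being careful that the condition $\lambda_1^2 s<1$ is compatible with $s>(\tfrac1\alpha-1)^2$. A secondary point worth documenting is why the spectral radius of the Toeplitz operator equals the maximum $\kappa(1)$ of $|\kappa|$ on the unit circle: since $\kappa(e^{i\varphi})$ traces an ellipse centred at $1+e^\theta$ whose rightmost point lies on the positive real axis, Theorem \ref{spectrum} immediately identifies the spectral radius with $\kappa(1)$, and no extra winding-number argument is needed.
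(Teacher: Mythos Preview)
Your proposal is correct and follows the paper's own argument essentially step for step: Cauchy--Schwarz in the dual pairing, transfer to $\ell^2$ via the isometry lemmas, Gelfand's formula, then minimisation of the resulting bound over the admissible weights $s\in\bigl((\tfrac1\alpha-1)^2,1/\lambda_1^2\bigr)$. Your care about the admissibility of $v$ for general $c<\tfrac14$ (where $v_k\sim\lambda_1^k$ rather than $v_k=k$) is in fact slightly more explicit than the paper, which cites~\eqref{vinl2s} even though that display only treats the boundary case; the correct endpoint $1/\lambda_1^2$ you derive matches what the paper uses.
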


\begin{proof}
Recall from~\eqref{vinl2s} that $v\in\ell^2_s$; by Proposition~\ref{abounded} $A(\theta)\in\mathcal L(\ell^2_s)$ and therefore $A(\theta)v\in\ell^2_s$. Also $w\in\ell^2_{s^{-1}}$   from~\eqref{winel2sstar}. Hence, by \eqref{shortlambda} and Cauchy-Schwarz,
\begin{align*}
\Lambda(\theta) & =  \lim_{n\rightarrow\infty}\tfrac{1}{n}\log(w^{\sf T}A(\theta)^nv)+\log c\\
 & \leq  \lim_{n\rightarrow\infty}\tfrac{1}{n} \log(|w|_{\ell^{2}_{s^{-1}}} |A(\theta)^nv|_{\ell^2_s})+\log c.\\
 & \phantom{= \lim_{n\rightarrow\infty}\tfrac{1}{n}\log(|T_s\circ\widetilde{A(\theta)^n}\circ T^{-1}_sv|_{\ell^2})+\log c}.\\[-13mm]
  \end{align*}
The norm of $w$ does not contribute to the limit since it does not depend on $n$. By Lemma~\ref{opnorm}, Lemma~\ref{transform}, and Lemma~\ref{commute} we can continue the previous estimate
\begin{align*}  
\phantom{\Lambda(\theta)} & =  \lim_{n\rightarrow\infty}\tfrac{1}{n}\log(|T_s\circ\widetilde{A(\theta)^n}\circ T^{-1}_sv|_{\ell^2})+\log c\\
 & \leq  \lim_{n\rightarrow\infty}\tfrac{1}{n}\log(|\tilde A(\theta)^n|_{\mathcal L(\ell^2)} |T_s^{-1}v|_{\ell^2})+\log c;
  \end{align*}
once again, the norm of $T_s^{-1}v$ does not contribute to the limit since it does not depend on $n$. 
We insert the factor $\tfrac{1}{n}$ to the logarithm by continuity and use the definition of spectral radius
 \begin{align*}
\phantom{\Lambda(\theta)} & =  \lim_{n\rightarrow\infty}\log(|\tilde A(\theta)^n|_{\mathcal L(\ell^2)}^{\frac{1}{n}}) +\log c\\
 & =  \log \rho(\tilde A(\theta))+\log c.\\
  & \phantom{= \lim_{n\rightarrow\infty}\tfrac{1}{n}\log(|T_s\circ\widetilde{A(\theta)^n}\circ T^{-1}_sv|_{\ell^2})+\log c}\\[-13mm]
 \end{align*}
We now use~\eqref{eq:spr} to find the spectral radius\\[-3mm]
\begin{align*}
\phantom{\Lambda(\theta)} & = \log\left[1+\sqrt{s}+e^\theta\left(1+\frac{1}{\sqrt{s}}\right)\right]+\log c.\\
 & \phantom{= \lim_{n\rightarrow\infty}\tfrac{1}{n}\log(|T_s\circ\widetilde{A(\theta)^n}\circ T^{-1}_sv|_{\ell^2})+\log c}\\[-13mm]
 \end{align*}
 Since the left hand side does not depend on $s$, it is a lower bound on the right hand side for $s$, so we take the infimum over the 
 interval  $\big((\tfrac{1}{\alpha}-1)^2,\tfrac{1}{\lambda_1^2}\big)$.
\[\Lambda(\theta)\leq\inf_{s\in\big((\frac{1}{\alpha}-1)^2,\frac{1}{\lambda_1^2}\big)}\log\left[1+\sqrt{s}+e^\theta\left(1+\frac{1}{\sqrt{s}}\right)\right]+\log c.\]
Given $\theta$, the value of $s$ that reaches the infimum of this function is given by
 \[s^{*}=\begin{cases}
  \left(\frac{1}{\alpha}-1\right)^2 & \text{if }   -\infty<\theta\leq 2\log\left(\frac{1}{\alpha}-1\right),\\[2mm]
 e^{\theta} & \text{if } \log\left(\frac{1}{\alpha}-1\right)^2<\theta\leq-2\log\lambda_1,\\[2mm]
 \frac{1}{\lambda_1^2} & \text{if } -2\log\lambda_1<\theta<\infty.
 \end{cases}\]
Plugging $s^*$ into the formula gives the result of the lemma.
\end{proof}

%Lower bound
%%%%%%%%%%%%%%%%%%%%%%%%%%%%%%%%%%%%%%%%%%
%%%%%%%%%%%%%%%%%%%%%%%%%%%%%%%%%%%%%%%%%%
%%%%%%%%%%%%%%%%%%%%%%%%%%%%%%%%%%%%%%%%%%
\section{Lower bound: A combinatorial approach}
In order to find the lower bound we use a completely different approach. We will focus on the powers of $e^\theta D+E$.

\begin{prop}\label{recursion}
 There exists a sequence of polynomials $f^n_{p,j}(\theta)$ 
in $e^\theta$ such that \begin{equation}\label{powerpj}(e^\theta D+E)^n=\sum_{p=1}^n\sum_{j=0}^pf^n_{p,j}(\theta)E^{p-j}D^j\end{equation}
and they can be defined recursively in two ways: Starting with
\[f^1_{1,0}(\theta)=1,\qquad f^1_{1,1}(\theta)=e^\theta,\]
the first characterisation for $f_{p,j}^n$ with $n>1$ is \\[-1mm]
\begin{equation}\label{rec1}
f_{p,j}^n(\theta)=\left\{\begin{array}{c@{\!\,\,}lll}
\sum_{k=1}^{n-1}f_{k,k}^{n-1}(\theta) & \text{if }n>1,& p=1, & j\in\{0,1\} \\[2mm]
f^{n-1}_{p-1,0}(\theta)+\sum_{k=p}^{n-1}f_{k,k-p+1}^{n-1}(\theta) & \text{if }n>1,& 1<p<n, & j=0 \\[2mm]
e^\theta f^{n-1}_{p-1,j-1}(\theta)+\sum_{k=p}^{n-1}f_{k,k-p+j}^{n-1}(\theta) & \text{if }n>1,& j\leq p<n, & j>0 \\[2mm]
f_{n-1,0}^{n-1}(\theta) & \text{if }n>1,& p=n, & j=0 \\[2mm]
e^\theta f_{n-1,j-1}^{n-1}(\theta) & \text{if }n>1,& p=n, & 0<j\leq p; \\
\end{array}\right.\end{equation}
and the second characterisation is 
\begin{equation}\label{rec2}f_{p,j}^n(\theta)=\left\{\begin{array}{c@{\!\,\,}lll}
\sum_{k=1}^{n-1}e^\theta f_{k,0}^{n-1}(\theta) & \text{if }n>1,& p=1, & j\in\{0,1\} \\[2mm]
f^{n-1}_{p-1,j}(\theta)+\sum_{k=p}^{n-1}e^\theta f_{k,j}^{n-1}(\theta) & \text{if }n>1,& 1<p<n, & 0\leq j<p \\[2mm]
e^\theta f^{n-1}_{p-1,p-1}(\theta)+\sum_{k=p}^{n-1}e^\theta f_{k,p-1}^{n-1}(\theta) & \text{if }n>1,& 1<p<n, & j=p \\[2mm]
f_{n-1,j}^{n-1}(\theta) & \text{if }n>1,& p=n, & 0\leq j<n \\[2mm]
e^\theta f_{n-1,n-1}^{n-1}(\theta) & \text{if }n>1,& p=n, & j=n. \\[1mm]
\end{array}\right.\end{equation}
\end{prop}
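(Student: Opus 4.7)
The plan is to establish (\ref{powerpj}) by induction on $n$ and to read off the two recursions according to whether we split off the rightmost or the leftmost factor of $e^\theta D+E$. The engine is the rewriting rule $DE=D+E$ from (\ref{mpaconditionsa}); a short induction on $j$ and on $k$ respectively yields the auxiliary identities
\[ D^j E = E + \sum_{i=1}^{j} D^i \quad (j\ge 1), \qquad D E^k = D + \sum_{i=1}^{k} E^i \quad (k\ge 1),\]
which describe precisely how a single $E$ moves past a block $D^j$, and how a single $D$ past a block $E^k$. Iterating these rules reduces any word in $D$ and $E$ to a linear combination of normal-form monomials $E^{a}D^{b}$, so the existence of polynomial coefficients satisfying (\ref{powerpj}) is automatic once the base case $n=1$ is noted.

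For the first characterisation I would write $(e^\theta D + E)^n = (e^\theta D + E)^{n-1}(e^\theta D + E)$ and apply the inductive hypothesis. Right-multiplying the normal-form monomial $E^{p-j}D^{j}$ by $e^\theta D$ produces $e^\theta E^{p-j}D^{j+1}$, which sits already in normal form and contributes $e^\theta f^{n-1}_{p-1,j-1}$ to $f^n_{p,j}$ whenever $j\ge 1$. Right-multiplication by $E$ yields $E^{p-j}D^{j}E$, which for $j=0$ is $E^{p+1}$ and for $j\ge 1$ reduces via the first auxiliary identity to $E^{p-j+1}+\sum_{i=1}^{j}E^{p-j}D^i$. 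To isolate the coefficient of a fixed $E^{p-j}D^{j}$ I track all source pairs $(p',j')$ whose reduction contributes to this monomial; solving the index constraints gives $p'=k$ in the range $\{p,\dots,n-1\}$ with $j'=k-p+j$ (or $j'=k-p+1$ when $j=0$). Separating the boundary cases $p=1$, $p=n$, and $j=0$, in which some of these source indices fall outside the admissible range of $f^{n-1}_{\,\cdot\,,\cdot}$, produces each of the five lines of (\ref{rec1}) in turn.

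The second characterisation is proved by the symmetric argument, expanding $(e^\theta D + E)^n = (e^\theta D + E)(e^\theta D + E)^{n-1}$. Left-multiplication of $E^{p-j}D^j$ by $E$ preserves normal form, while left-multiplication by $e^\theta D$ gives $e^\theta DE^{p-j}D^j$, which the second auxiliary identity rewrites as $e^\theta D^{j+1}+\sum_{i=1}^{p-j}e^\theta E^{i}D^{j}$ (or as $e^\theta D^{j+1}$ alone when $p=j$). Collecting coefficients of each normal-form monomial and separating the analogous corner cases reproduces the five lines of (\ref{rec2}). To conclude that the two recursions define the same polynomials I would invoke uniqueness of the expansion (\ref{powerpj}): in the explicit representation (\ref{explicit}), the matrices $E^{a}D^{b}$ are linearly independent, since $b$ is read off from the highest nonzero superdiagonal and $a$ from the lowest nonzero subdiagonal of $E^{a}D^{b}$, so the coefficients $f^n_{p,j}$ are determined by $(e^\theta D+E)^n$ alone.

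The main obstacle is purely bookkeeping. Because a single reduction of $E^{p'-j'}D^{j'}E$ with $j'\ge 1$ produces $j'+1$ normal-form terms at once, the coefficient of $E^{p-j}D^{j}$ in $(e^\theta D+E)^n$ receives contributions from a whole family of source pairs sharing the relation $p'-j'=p-j$ (or $p'-j'+1=p$ when $j=0$). Keeping these families apart from the diagonal contribution $e^\theta f^{n-1}_{p-1,j-1}$, and from the edge cases in which the inductive hypothesis lacks a predecessor term, is the one delicate point. The coincidence $f^n_{1,0}=f^n_{1,1}$ in the first line of (\ref{rec1}) is not a typo: the reduction $D^k E = E+D+D^2+\cdots+D^k$ deposits a single $E$ and a single $D$ with identical coefficient $f^{n-1}_{k,k}$, so summing over $k$ yields the same expression for both values of $j$.
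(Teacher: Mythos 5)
Your proof is essentially the paper's own argument: induction on $n$, expanding $(e^\theta D+E)^{n}=(e^\theta D+E)^{n-1}(e^\theta D+E)$ to obtain \eqref{rec1} and $(e^\theta D+E)(e^\theta D+E)^{n-1}$ to obtain \eqref{rec2}, with the rewriting rules $D^jE=E+\sum_{i=1}^{j}D^i$ and $DE^k=D+\sum_{i=1}^{k}E^i$ driving the bookkeeping; your index tracking and the treatment of the corner cases (including the coincidence $f^n_{1,0}=f^n_{1,1}$) are correct. One caveat concerns the step you add beyond the paper, namely that the two recursions yield the same polynomials because the monomials $E^aD^b$ are linearly independent: the claim is true and worth making explicit, but "read $b$ off the highest superdiagonal and $a$ off the lowest subdiagonal" does not prove it, since all monomials with the same $b$ have identical (all-ones) entries along the superdiagonal $b$, so the extreme diagonals only give relations like $\sum_a c_{a,b^*}=0$. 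A clean repair is to write $D=I+U$, $E=I+L$ with $U,L$ the shift matrices, so that $E^aD^b=\sum_{i\le a,\,j\le b}\binom{a}{i}\binom{b}{j}L^iU^j$; the matrices $L^iU^j$ (ones on the offset-$(j-i)$ diagonal starting at row $i+1$) are linearly independent, and the change of basis is unitriangular, whence the $E^aD^b$ are independent. Note that some such boundary-sensitive argument is genuinely needed: the bulk (Toeplitz-symbol) picture cannot work, because symbols commute and $DE=D+E$ forces the symbol of $ED$ to equal the sum of those of $D$ and $E$.
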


\begin{proof}
 We prove this by induction. For $n=1$ we have $(e^\theta D+E)^1=e^\theta D+E$ which settles the initial values $f^1_{1,0}(\theta)=1$ and $f^1_{1,1}(\theta)=e^\theta$. To find the recursion we assume the induction hypothesis:\[(e^\theta D+E)^n=\sum_{p=1}^n\sum_{j=0}^pf^n_{p,j}(\theta)E^{p-j}D^j\] and expand the next power. However, there are two ways we can use to expand, namely $(e^\theta D+E)^{n+1}=(e^\theta D+E)^n(e^\theta D+E)$ or $(e^\theta D+E)(e^\theta D+E)^n$. The former will give equation~\eqref{rec1}, the latter~\eqref{rec2}. The functions $f^n_{p,j}$ are all polynomials in $e^\theta$ because this holds for the induction hypothesis and the operations in the induction step are only multiplications and additions of polynomials with positive coefficients.
\end{proof}

We now state an auxiliary result. 
\begin{lem}\label{auxprop}
 For $n\geq2$ and $1\leq p\leq r\leq n-1$, we have the identity
 \begin{equation}
   \label{eq:binsum}
   \sum_{k=p}^rk\binom{n-1-k}{r-k}=\frac{np-pr+r}{n-r+1}\binom{n-p}{r-p}.
 \end{equation}
\end{lem}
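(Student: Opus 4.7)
The plan is a direct evaluation via two applications of the hockey-stick identity together with one standard binomial manipulation. First I would reverse the summation variable by setting $j=r-k$, so that $k=r-j$ and $j$ runs from $0$ to $r-p$. This rewrites the left-hand side as
\[
r\sum_{j=0}^{r-p}\binom{n-1-r+j}{j}\;-\;\sum_{j=0}^{r-p}j\binom{n-1-r+j}{j}.
\]
The advantage of this change of variable is that the binomial coefficients now have the form $\binom{a+j}{j}$, which is the standard form for the hockey-stick identity $\sum_{j=0}^{M}\binom{a+j}{j}=\binom{a+M+1}{M}$.

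Applying this identity with $a=n-1-r$ and $M=r-p$ evaluates the first sum to $\binom{n-p}{r-p}$, so the first piece contributes $r\binom{n-p}{r-p}$. For the second, weighted sum, I would use the elementary identity
\[
j\binom{n-1-r+j}{j}=(n-r)\binom{n-1-r+j}{j-1},
\]
which is verified from the factorial formula (both sides equal $\frac{(n-1-r+j)!}{(j-1)!(n-1-r)!}$). After shifting $j'=j-1$, the second sum becomes $(n-r)\sum_{j'=0}^{r-p-1}\binom{n-r+j'}{j'}$, which is again a hockey-stick sum and evaluates to $(n-r)\binom{n-p}{r-p-1}$.

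Putting these together,
\[
\sum_{k=p}^{r}k\binom{n-1-k}{r-k}=r\binom{n-p}{r-p}-(n-r)\binom{n-p}{r-p-1}.
\]
Finally, I would use the ratio $\binom{n-p}{r-p-1}=\frac{r-p}{n-r+1}\binom{n-p}{r-p}$ to factor out $\binom{n-p}{r-p}$, producing
\[
\binom{n-p}{r-p}\cdot\frac{r(n-r+1)-(n-r)(r-p)}{n-r+1}.
\]
Expanding the numerator yields $r(n-r+1)-(n-r)(r-p)=rn-r^{2}+r-nr+np+r^{2}-rp=np-pr+r$, which is exactly $\frac{np-pr+r}{n-r+1}\binom{n-p}{r-p}$, as claimed.

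The only step that requires any care is the algebraic simplification at the end; the real content of the lemma is the two hockey-stick applications separated by the identity $j\binom{n-1-r+j}{j}=(n-r)\binom{n-1-r+j}{j-1}$. I do not expect any genuine obstacle, since every manipulation is an instance of a well-known binomial identity; the main risk is a book-keeping slip in the index shift $j'=j-1$ when the lower endpoint of the weighted sum is correctly advanced from $j=0$ (where the summand vanishes) to $j=1$.
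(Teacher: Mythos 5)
Your proof is correct, and it takes a genuinely different route from the paper. The paper proves the identity by induction on $n$: it first checks the boundary cases $p=r$ and $r=n-1$ directly, then splits $\binom{n-k}{r-k}$ by Pascal's rule and applies the induction hypothesis to each of the two resulting sums. You instead evaluate the sum in closed form directly: after the reversal $j=r-k$ you apply the hockey-stick identity $\sum_{j=0}^{M}\binom{a+j}{j}=\binom{a+M+1}{M}$ twice, linked by the absorption identity $j\binom{n-1-r+j}{j}=(n-r)\binom{n-1-r+j}{j-1}$, arriving at $r\binom{n-p}{r-p}-(n-r)\binom{n-p}{r-p-1}$, and the final algebraic simplification checks out (the numerator indeed reduces to $np-pr+r$). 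Your index shift is also handled correctly, since the $j=0$ term of the weighted sum vanishes, and the degenerate case $p=r$ is covered automatically because the second hockey-stick sum is then empty. What your approach buys is the elimination of the induction and of the separate boundary cases, making the argument shorter and showing explicitly where the closed form comes from; what the paper's induction buys is that it only relies on Pascal's rule, which fits the recursive flavour of the surrounding coefficient computations in Proposition 5.3. Either argument is acceptable.
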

\begin{proof}
First note that the cases $p=r$ and $r=n-1$ are easy to check directly. We prove the general case by induction over $n$. 
The case $n=2$ 
is again easy to see.
We now assume that~\eqref{eq:binsum} holds for fixed $n\geq2$ and all $1\leq p\leq r\leq n-1$.
\smallskip

To show the result for $n+1$ we may assume $1\leq p < r\leq n-1$, ignoring the easy cases settled at the beginning of the proof. 
Starting from the left hand side for $n+1$ and using the  induction hypothesis on the third equality,
\begin{align*}
\sum_{k=p}^rk\binom{n-k}{r-k} & = \sum_{k=p}^{r-1}k\left[\binom{n-k-1}{r-k-1}+\binom{n-k-1}{r-k} \right]+r\\
& =   \sum_{k=p}^{r-1}k\binom{n-k-1}{r-k-1}+\sum_{k=p}^{r}k\binom{n-k-1}{r-k}\\
& = \frac{np-p(r-1)+(r-1)}{n-(r-1)+1}\binom{n-p}{(r-1)-p}+\frac{np-pr+r}{n-r+1}\binom{n-p}{r-p}\\
& =   \frac{(n+1)p-pr+r}{(n+1)-r+1}\binom{(n+1)-p}{r-p}.
\end{align*}
\end{proof}

We now identify the coefficients $f^n_{p,p}(\theta)$, for $1\leq p \leq n$.

\begin{prop}\label{pp}
For the coefficients defined in Proposition~\ref{recursion}, 
\begin{equation}
  \label{eq:fpp}
f_{p,p}^n(\theta)=
\begin{cases}
\sum_{r=p}^{n-1}\frac{p}{n}\binom{n-p-1}{r-p}\binom{n}{r}e^{r\theta} & \text{ if }1\leq p<n,\\[2mm]
e^{n\theta} &  \text{ if }p=n.\\
\end{cases}
\end{equation}
\end{prop}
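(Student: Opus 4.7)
The plan is induction on $n$, using the first recursion \eqref{rec1} restricted to the diagonal $j=p$. The relevant cases of that recursion read
\begin{align*}
f^n_{n,n}(\theta) &= e^\theta\, f^{n-1}_{n-1,n-1}(\theta),\\
f^n_{p,p}(\theta) &= e^\theta\, f^{n-1}_{p-1,p-1}(\theta) + \sum_{k=p}^{n-1} f^{n-1}_{k,k}(\theta) \qquad (1<p<n),\\
f^n_{1,1}(\theta) &= \sum_{k=1}^{n-1} f^{n-1}_{k,k}(\theta).
\end{align*}
The base case $n=1$ matches the initial conditions $f^1_{1,1}(\theta)=e^\theta$, and the case $p=n$ of the inductive step is the trivial computation $e^\theta\cdot e^{(n-1)\theta}=e^{n\theta}$. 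The real work lies in verifying the formula for $1\leq p<n$ assuming it holds at level $n-1$.

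For $1\leq p<n$ I would read off the coefficient of $e^{r\theta}$ on each side. The contribution of $e^\theta f^{n-1}_{p-1,p-1}(\theta)$ at $e^{r\theta}$, present only for $p\geq 2$, works out to $\frac{p-1}{n-1}\binom{n-p-1}{r-p}\binom{n-1}{r-1}$. In the sum $\sum_{k=p}^{n-1}f^{n-1}_{k,k}(\theta)$ I would isolate the top summand $f^{n-1}_{n-1,n-1}=e^{(n-1)\theta}$, which contributes only at $r=n-1$, and apply the inductive formula to the remaining terms $p\leq k\leq n-2$, giving $\frac{\binom{n-1}{r}}{n-1}\sum_{k=p}^{r}k\binom{n-k-2}{r-k}$ for $p\leq r\leq n-2$. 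This is exactly the sum that Lemma \ref{auxprop}, applied with $n$ replaced by $n-1$, collapses to a single binomial expression.

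After invoking the lemma and using the elementary identities $\binom{n-1}{r-1}=\frac{r}{n-r}\binom{n-1}{r}$ and $\binom{n-1}{r}=\frac{n-r}{n}\binom{n}{r}$, the two contributions combine, with the numerator $(p-1)r+(n-1)p-pr+r$ simplifying to $p(n-1)$, into precisely $\frac{p}{n}\binom{n-p-1}{r-p}\binom{n}{r}$, the target coefficient. The endpoint $r=n-1$ lies outside the range of \eqref{eq:binsum} and requires a separate one-line check: the contribution $1$ from the isolated $e^{(n-1)\theta}$ plus the contribution $p-1$ from $e^\theta f^{n-1}_{p-1,p-1}(\theta)$ at its top degree sums to $p$, in agreement with $\frac{p}{n}\binom{n-p-1}{n-p-1}\binom{n}{n-1}=p$. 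The boundary case $p=1$ runs along the same lines with the $e^\theta f^{n-1}_{p-1,p-1}$ contribution suppressed.

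The only non-routine ingredient is Lemma \ref{auxprop}, which is exactly the identity needed to close the induction; once it is in hand the remaining steps are arithmetic bookkeeping. The main care point is the boundary handling around $r=n-1$, where the isolated summand $f^{n-1}_{n-1,n-1}$ must be peeled off the sum before Lemma \ref{auxprop} is applied, but as sketched above this reduces to a short direct computation.
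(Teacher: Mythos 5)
Your proposal is correct and follows essentially the same route as the paper's proof: the diagonal case $j=p$ of recursion \eqref{rec1}, induction on $n$ with the $p=n$ branch handled trivially, and Lemma~\ref{auxprop} as the key identity closing the inductive step. The only cosmetic difference is that you compare coefficients of $e^{r\theta}$ directly (peeling off the top summand $f^{n-1}_{n-1,n-1}$), whereas the paper reindexes and regroups the double sums before applying the same lemma.
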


\begin{proof}
Putting $j=p$ in equation~\eqref{rec1} of Proposition~\ref{recursion} we get a simplified recursion:
\[f^1_{1,1}(\theta)=e^\theta\]
and
\begin{equation}\label{easyrec}f_{p,p}^n(\theta)=\left\{\begin{array}{cll}
\sum_{k=1}^{n-1}f_{k,k}^{n-1}(\theta) & \text{ if }n>1,& p=1 \\
e^\theta f^{n-1}_{p-1,p-1}(\theta)+\sum_{k=p}^{n-1}f_{k,k}^{n-1}(\theta) & \text{ if }n>1,& 1<p<n\\
e^\theta f_{n-1,n-1}^{n-1}(\theta) & \text{ if }n>1,& p=n. \\
\end{array}\right.\end{equation}
If $p=n$, it is easy to see by induction that
\begin{equation}\label{pjen}f^n_{n,n}(\theta)=e^{n\theta}.\end{equation}
Now, if $p<n$, we proceed again by induction. Here the base of induction has to be $n=2$. The recursion equation~\eqref{easyrec} gives
$f^2_{1,1}(\theta)=f^1_{1,1}(\theta)=e^\theta$, as required by formula~\eqref{eq:fpp}. 
We now assume 
that~\eqref{eq:fpp} holds for fixed $n\geq 2$ and all $p<n$. 
We first consider the branch of~\eqref{easyrec}, referring to the case~$p=1$. 
Using~\eqref{pjen} and the induction hypothesis we obtain
\begin{align*}
   f^{n+1}_{1,1}(\theta) & =  \sum_{k=1}^nf^n_{k,k}(\theta)
 =  \sum_{k=1}^{n-1}f^n_{k,k}(\theta)+f^n_{n,n}(\theta)
 =  \sum_{k=1}^{n-1}\sum_{r=k}^{n-1}\frac{k}{n}\binom{n-k-1}{r-k}\binom{n}{r}e^{r\theta}+e^{n\theta},
\phantom{XXXX}
\end{align*}
changing the order of summation and using Lemma~\ref{auxprop} gives
\begin{align*}
\phantom{f^{n+1}_{1,1}}
& =  \sum_{r=1}^{n-1}\sum_{k=1}^{r}\frac{k}{n}\binom{n-k-1}{r-k}\binom{n}{r}e^{r\theta}+e^{n\theta}
=  \sum_{r=1}^{n-1}\frac{1}{n}\binom{n}{r}e^{r\theta}\sum_{k=1}^{r}k\binom{n-k-1}{r-k}+e^{n\theta}\\
& =  \sum_{r=1}^{n-1}\frac{1}{n}\binom{n}{r}e^{r\theta}\frac{n}{n-r+1}\binom{n-1}{r-1}+e^{n\theta}
=  \sum_{r=1}^{n}\frac{1}{n+1}\binom{n-1}{r-1}\binom{n+1}{r}e^{r\theta}.\\
&\phantom{=\sum_{r=1}^{n-1}\frac{1}{n}\binom{n}{r}e^{r\theta}\frac{n}{n-r+1}\binom{n-1}{r-1}+e^{n\theta}.}\\[-18mm]
\end{align*}
Since this is the result required by the induction step, the case $p=1$ is settled. We can therefore turn our attention to the 
remaining branch of~\eqref{easyrec}, covering $1<p\leq n$. We obtain from the induction hypothesis
\begin{align*}
f^{n+1}_{p,p}(\theta) & = e^\theta f^n_{p-1,p-1}(\theta)+\sum_{k=p}^nf^n_{k,k}(\theta)\\
& =  e^\theta\sum_{r=p-1}^{n-1}\frac{p-1}{n}\binom{n-p}{r-p+1}\binom{n}{r}e^{r\theta} +\sum_{k=p}^{n-1}\sum_{r=k}^{n-1}\frac{k}{n}\binom{n-k-1}{r-k}\binom{n}{r}e^{r\theta}+e^{n\theta};\\[-5mm]
\end{align*}
changing the summation order and grouping the terms by powers of $e^\theta$ yields
\begin{align*}
\phantom{f^{n+1}_{p,p}(\theta)}& =  \sum_{r=p}^{n}\frac{p-1}{n}\binom{n-p}{r-p}\binom{n}{r-1}e^{r\theta} +\sum_{r=p}^{n-1}\sum_{k=p}^{r}\frac{k}{n}\binom{n-k-1}{r-k}\binom{n}{r}e^{r\theta}+e^{n\theta}\\
& = \sum_{r=p}^{n-1}\left[\frac{p-1}{n}\binom{n-p}{r-p}\binom{n}{r-1} +\sum_{k=p}^{r}\frac{k}{n}\binom{n-k-1}{r-k}\binom{n}{r}\right]e^{r\theta}\\
& \phantom{= +}+\left[\frac{p-1}{n}\binom{n-p}{n-p}\binom{n}{n-1} +1\right]e^{n\theta};\\
& \phantom{=  e^\theta\sum_{r=p-1}^{n-1}\frac{p-1}{n}\binom{n-p}{r-p+1}\binom{n}{r}e^{r\theta} +\sum_{k=p}^{n-1}\sum_{r=k}^{n-1}\frac{k}{n}\binom{n-k-1}{r-k}\binom{n}{r}e^{r\theta}+e^{n\theta}}\\
\end{align*}
\vspace{-2cm}

simplifying and using Lemma~\ref{auxprop} in the second line gives
\begin{align*}
\phantom{f^{n+1}_{p,p}(\theta)}& =  \sum_{r=p}^{n-1}\left[\frac{r(p-1)}{n-r+1}\binom{n-p}{r-p}+\sum_{k=p}^{r}k\binom{n-k-1}{r-k}\right]\frac{1}{n}\binom{n}{r}e^{r\theta}+pe^{n\theta}\\
& = \sum_{r=p}^{n-1}\left[\frac{r(p-1)}{n-r+1}\binom{n-p}{r-p}+\frac{np-pr+r}{n-r+1}\binom{n-p}{r-p}\right]\frac{1}{n}\binom{n}{r}e^{r\theta}+pe^{n\theta}\phantom{XXXX}\\
\end{align*}
\begin{align*}
& = \sum_{r=p}^{n-1}\left[\frac{r(p-1)}{n-r+1}+\frac{np-pr+r}{n-r+1}\right]\frac{1}{n}\binom{n-p}{r-p}\binom{n}{r}e^{r\theta}+pe^{n\theta};\\
&\phantom{=  e^\theta\sum_{r=p-1}^{n-1}\frac{p-1}{n}\binom{n-p}{r-p+1}\binom{n}{r}e^{r\theta} +\sum_{k=p}^{n-1}\sum_{r=k}^{n-1}\frac{k}{n}\binom{n-k-1}{r-k}\binom{n}{r}e^{r\theta}+e^{n\theta}}
\end{align*}
\vspace{-1.5cm}

grouping that last term with the rest of the terms in the sum finally results in 
\begin{align*}
\phantom{f^{n+1}_{p,p}(\theta)}& = \sum_{r=p}^{n}\frac{p}{n+1}\binom{n-p}{r-p}\binom{n+1}{r}e^{r\theta}.\\
&\phantom{=  e^\theta\sum_{r=p-1}^{n-1}\frac{p-1}{n}\binom{n-p}{r-p+1}\binom{n}{r}e^{r\theta} +\sum_{k=p}^{n-1}\sum_{r=k}^{n-1}\frac{k}{n}\binom{n-k-1}{r-k}\binom{n}{r}e^{r\theta}+e^{n\theta}}\\[-2cm]
\end{align*}
\end{proof}

\begin{prop}\label{symm} For all $n\in\mathbb N$, $1\leq p \leq n$ and $0\leq j \leq p$ we have the symmetry
 \[f^n_{p,j}(\theta)=e^{n\theta}f^n_{p,p-j}(-\theta).\]
\end{prop}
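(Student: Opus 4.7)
My plan is to exploit the transpose symmetry of the explicit representation, namely the observation that the matrices in~\eqref{explicit} satisfy $D^{\sf T}=E$ and $E^{\sf T}=D$. Taking transposes of the expansion~\eqref{powerpj} and using $(E^{p-j}D^j)^{\sf T}=(D^{\sf T})^j(E^{\sf T})^{p-j}=E^jD^{p-j}$ yields
\[
(e^\theta E+D)^n=\sum_{p=1}^n\sum_{j=0}^p f^n_{p,j}(\theta)\,E^jD^{p-j}.
\]
On the other hand, factoring $e^{n\theta}$ and applying~\eqref{powerpj} at $-\theta$ gives
\[
(e^\theta E+D)^n=e^{n\theta}(e^{-\theta}D+E)^n=e^{n\theta}\sum_{p=1}^n\sum_{j=0}^p f^n_{p,j}(-\theta)\,E^{p-j}D^j.
\]
Reindexing the first expansion by $j\mapsto p-j$ to match the normal form $E^{p-j}D^j$ and equating coefficients then gives $f^n_{p,p-j}(\theta)=e^{n\theta}f^n_{p,j}(-\theta)$, which is the claim after relabelling $j\leftrightarrow p-j$.

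The main obstacle is justifying the identification of coefficients, which requires the monomials $\{E^{p-j}D^j:p\ge 1,\,0\le j\le p\}$ to be linearly independent. Conceptually, this holds in the abstract algebra $\mathbb{C}\langle D,E\rangle/(DE-D-E)$, because the rewriting rule $DE\to D+E$ is both terminating and confluent, so every word reduces uniquely to a linear combination of normal forms; the recursion of Proposition~\ref{recursion} is effectively this reduction, and hence the coefficients $f^n_{p,j}$ are uniquely determined. At the level of operators on $\ell^2_s$, one can also verify linear independence directly by computing, for indices sufficiently far from the top-left corner, the matrix entry $(E^{p-j}D^j)_{i,l}=\binom{p}{i-l+j}$ via a Vandermonde convolution, which separates distinct $(p,j)$.

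A self-contained alternative that sidesteps the linear independence issue is an induction on $n$ using both recursions in tandem. The base case $n=1$ follows from $e^\theta f^1_{1,1}(-\theta)=1=f^1_{1,0}(\theta)$ and the symmetric statement at $(p,j)=(1,1)$. For the inductive step, write the symmetry at level $n-1$ as $e^{(n-1)\theta}f^{n-1}_{k,k-i}(-\theta)=f^{n-1}_{k,i}(\theta)$, take the first recursion~\eqref{rec1} for $f^n_{p,p-j}(\theta)$, evaluate at $-\theta$, multiply by $e^{n\theta}$, and apply the induction hypothesis to every term; each of the five branches of~\eqref{rec1} then transforms precisely into the matching branch of the second recursion~\eqref{rec2} for $f^n_{p,j}(\theta)$. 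For instance, the generic branch $1<p<n$, $0<j\le p$ of~\eqref{rec1} becomes, after the transformation,
\[
e^{n\theta}f^n_{p,p-j}(-\theta)=f^{n-1}_{p-1,j}(\theta)+\sum_{k=p}^{n-1}e^\theta f^{n-1}_{k,j}(\theta),
\]
which is exactly~\eqref{rec2} at $(p,j)$. The pairing of branches is not a coincidence: the two recursions encode the two associative factorisations $(e^\theta D+E)^{n+1}=(e^\theta D+E)^n(e^\theta D+E)=(e^\theta D+E)(e^\theta D+E)^n$, and the claimed symmetry is precisely what makes these factorisations produce consistent coefficients. The remaining work is then routine case-by-case bookkeeping.
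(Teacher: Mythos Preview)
Your proposal is correct. Your second approach --- taking~\eqref{rec1} at index $(p,p-j)$, evaluating at $-\theta$, multiplying by $e^{n\theta}$, and invoking the induction hypothesis to land on~\eqref{rec2} at $(p,j)$ --- is essentially the paper's argument, only stated more transparently. The paper packages the same computation by setting $g^n_{p,j}(\theta):=e^{n\theta}f^n_{p,p-j}(-\theta)$, rewriting the expansion~\eqref{powerpj} as $(e^\theta D+E)^n=\sum_{p,j}e^{n\theta}g^n_{p,j}(-\theta)E^jD^{p-j}$, and then right-multiplying by $e^\theta D+E$ to read off that $g$ obeys~\eqref{rec2} with the same initial data as $f$; since~\eqref{rec2} determines its solution uniquely, $g=f$. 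Unwinding this, the change of variables $f\mapsto g$ together with the reindexing $E^{p-j}D^j\mapsto E^jD^{p-j}$ is exactly what turns the right-multiplication recursion~\eqref{rec1} into~\eqref{rec2}, which is your inductive bookkeeping.

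Your first approach via the transpose symmetry $D^{\sf T}=E$ is a genuinely different and shorter route that the paper does not use. It trades the five-branch case analysis for the single step of transposing~\eqref{powerpj} and comparing with $e^{n\theta}$ times~\eqref{powerpj} at $-\theta$, at the cost of needing linear independence of the normal-form monomials $\{E^{p-j}D^j\}$. Your justification via confluence of the rewriting rule $DE\to D+E$ is sound and, as you note, is in any case already implicit in Proposition~\ref{recursion} (the very fact that the two recursions~\eqref{rec1} and~\eqref{rec2} define the \emph{same} coefficients presupposes uniqueness of the normal form). The entrywise calculation $(E^{p-j}D^j)_{i,l}=\binom{p}{i-l+j}$ away from the boundary is also correct and gives an alternative concrete check. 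Either way, your transpose argument is a clean conceptual explanation of why the two recursions in Proposition~\ref{recursion} are intertwined by the symmetry $j\leftrightarrow p-j$, $\theta\leftrightarrow -\theta$.
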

\begin{proof}
 Defining the polynomials $g^n_{p,j}(\theta):=e^{n\theta}f^n_{p,p-j}(-\theta)$, we can write 
\begin{equation*}
  f^n_{p,j}(\theta)=e^{n\theta}g^n_{p,p-j}(-\theta),
\end{equation*}
and by the definition~\eqref{powerpj} of the polynomials $f^n_{p,j}(\theta)$ we obtain by changing the summation index
\begin{align}
   (e^\theta D+E)^n & =  \sum_{p=1}^n\sum_{j=0}^pf^n_{p,j}(\theta)E^{p-j}D^j
 = \sum_{p=1}^n\sum_{j=0}^pe^{n\theta}g^n_{p,p-j}(-\theta)E^{p-j}D^j \notag\\
 &=  \sum_{p=1}^n\sum_{j=0}^pe^{n\theta}g^n_{p,j}(-\theta)E^{j}D^{p-j}.
\label{eq:rec-g}  
\end{align}
Evaluating this expression for $n=1$, we find \[e^\theta D+E=e^\theta g^1_{1,0}(-\theta)D+e^\theta g^1_{1,1}(-\theta)E,\] and hence $g^1_{1,1}(\theta)=e^\theta$ and $g^1_{1,0}(\theta)=1$. Next we find a recursive relation for these polynomials by expanding
 and employing~\eqref{eq:rec-g}, 
\begin{align*}
   (e^\theta D+E)^{n+1} & =  (e^\theta D+E)^{n}(e^\theta D+E)\\
 & =  \left(\sum_{p=1}^n\sum_{j=0}^pe^{n\theta}g^n_{p,j}(-\theta)E^{j}D^{p-j}\right)(e^\theta D+E)
  \end{align*}
and equating the coefficients to 
\[ (e^\theta D+E)^{n+1}=\sum_{p=1}^{n+1}\sum_{j=0}^pe^{n\theta}g^n_{p,j}(-\theta)E^{j}D^{p-j}\]
we find that the polynomials $g^n_{p,j}$ satisfy the following recursion:
\[g^1_{1,0}(\theta)=1,\qquad g^1_{1,1}(\theta)=e^\theta\]
\[g_{p,j}^n(\theta)=\left\{\begin{array}{clll}
\sum_{k=1}^{n-1}e^\theta g_{k,0}^{n-1}(\theta) & \text{ if }n>1,& p=1, & j\in\{0,1\} \\
g^{n-1}_{p-1,j}(\theta)+\sum_{k=p}^{n-1}e^\theta g_{k,j}^{n-1}(\theta) & \text{ if }n>1,& 1<p<n, & 0\leq j<p \\
e^\theta g^{n-1}_{p-1,p-1}(\theta)+\sum_{k=p}^{n-1}e^\theta g_{k,p-1}^{n-1}(\theta) & \text{ if }n>1,& 1<p<n, & j=p \\
g_{n-1,j}^{n-1}(\theta) & \text{ if }n>1,& p=n, & 0\leq j<n \\
e^\theta g_{n-1,n-1}^{n-1}(\theta) & \text{ if }n>1,& p=n, & j=n. \\
\end{array}\right.\]
This is the recursion equation~\eqref{rec2} of Proposition~\ref{recursion}, and hence \smash{$f^n_{p,j}(\theta)=g^n_{p,j}(\theta)$}.
Thus from the definition of $g^n_{p,j}(\theta)$ we conclude that
$f^n_{p,j}(\theta)=e^{n\theta}f^n_{p,p-j}(-\theta)$ as claimed.
\end{proof}

From the symmetry rule of Proposition~\ref{symm} we obtain a simple characterisation of the coefficients $f^n_{p,0}(\theta)$.

\begin{cor}\label{fnp0}
  \[f_{p,0}^n(\theta)=\left\{\begin{array}{cl}
\displaystyle\sum_{r=1}^{n-p}\frac{p}{n}\binom{n-p-1}{r-1}\binom{n}{r}e^{r\theta} & \text{ if }1\leq p<n\\
1 &  \text{ if }p=n.\\
\end{array}\right.\]
\end{cor}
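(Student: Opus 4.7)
The plan is to derive this corollary as a direct algebraic consequence of the two results just established, namely the symmetry rule of Proposition~\ref{symm} and the explicit formula for $f^n_{p,p}(\theta)$ in Proposition~\ref{pp}. Specialising the symmetry relation $f^n_{p,j}(\theta)=e^{n\theta}f^n_{p,p-j}(-\theta)$ to $j=0$ immediately gives
\[
f^n_{p,0}(\theta)=e^{n\theta}f^n_{p,p}(-\theta),
\]
so no further induction or combinatorial identity is needed; the task reduces to manipulating the right-hand side into the form claimed.

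For the boundary case $p=n$, Proposition~\ref{pp} yields $f^n_{n,n}(-\theta)=e^{-n\theta}$, so the right-hand side collapses to $1$, matching the claim. For the main case $1\leq p<n$, I would substitute the sum representation of $f^n_{p,p}$ from Proposition~\ref{pp} and distribute the prefactor $e^{n\theta}$ to obtain
\[
f^n_{p,0}(\theta)=e^{n\theta}\sum_{r=p}^{n-1}\frac{p}{n}\binom{n-p-1}{r-p}\binom{n}{r}e^{-r\theta}=\sum_{r=p}^{n-1}\frac{p}{n}\binom{n-p-1}{r-p}\binom{n}{r}e^{(n-r)\theta}.
\]

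It then remains to reindex by $s=n-r$, so that $s$ ranges from $1$ to $n-p$, and to invoke the elementary binomial symmetries
\[
\binom{n-p-1}{r-p}=\binom{n-p-1}{(n-p-1)-(r-p)}=\binom{n-p-1}{s-1}, \qquad \binom{n}{r}=\binom{n}{n-r}=\binom{n}{s},
\]
which put the expression into precisely the shape stated in the corollary. No substantive obstacle is expected: once Propositions~\ref{pp} and~\ref{symm} are in hand, the proof is a one-step reindexing and carries no further combinatorial content.
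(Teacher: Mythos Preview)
Your proposal is correct and follows essentially the same approach as the paper, which simply states that the result follows by combining Proposition~\ref{symm} (specialised there at $j=p$, equivalent to your choice $j=0$ after swapping $\theta\leftrightarrow -\theta$) with Proposition~\ref{pp}. You have supplied the reindexing details that the paper leaves implicit, and they are all correct.
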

\begin{proof}
The result follows by combining Proposition~\ref{symm} for $j=p$, 
and Proposition~\ref{pp}.
\end{proof}

We plug the expansion~\eqref{powerpj} into equation~\eqref{shortlambda} and obtain
\begin{equation}\label{lambdapj}\Lambda(\theta) = \lim_{n\rightarrow\infty}\frac{1}{n}\log\left\{\sum_{p=1}^n\sum_{j=0}^p f^n_{p,j}(\theta) w^{\sf T} E^{p-j} D^jv\right\}+\log c.\end{equation}
Since all terms are positive, we can find lower bounds by considering only the terms for which $j=0$ or $j=p$. This is the 
content of the next couple of results.

\begin{prop}\label{LB1}For the cumulant generating function $\Lambda$   defined by~\eqref{shortlambda},
 \[\Lambda(\theta)\geq
 \begin{cases}
 2\log\left(1+e^{\theta/2}\right)+\log c & \text{ if } \theta\leq -2\log\lambda_1,\\
 \log\left(1+\lambda_1e^{\theta}\right)+\log\left(1+\frac{1}{\lambda_1} \right)+\log c & \text{ if }\theta >-2\log\lambda_1.\\
 \end{cases}\]
\end{prop}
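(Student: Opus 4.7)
The strategy is to keep only the diagonal summands $j=p$ in the expansion
\[w^T(e^\theta D+E)^n v = \sum_{p=1}^n\sum_{j=0}^p f^n_{p,j}(\theta)\,w^T E^{p-j}D^j v\]
from Proposition~\ref{recursion}; every summand is nonnegative, so dropping all other terms gives a bona fide lower bound, and the strategy is to isolate a single pair $(p^*,r^*)$ whose contribution already realises the claimed rate. The vector $v$ from~\eqref{explicit} essentially diagonalises $D$: setting $v^{(i)}_k=\lambda_i^k$ one has $Dv^{(i)} = (1+\lambda_i)v^{(i)}$ and $v = (v^{(1)}-v^{(2)})/(\lambda_1-\lambda_2)$, hence
\[w^T D^p v = \frac{(1+\lambda_1)^p\,w^Tv^{(1)} - (1+\lambda_2)^p\,w^Tv^{(2)}}{\lambda_1-\lambda_2}, \qquad w^T v^{(i)} = \frac{\lambda_i}{1-(1/\alpha-1)\lambda_i}.\]
The hypothesis $\varrho<\alpha$, equivalent to $(1/\alpha-1)\lambda_1<1$, makes both $w^Tv^{(i)}$ finite and positive, so there exist $c_0>0$ and $p_0\in\mathbb{N}$ with $w^TD^pv\geq c_0(1+\lambda_1)^p$ for all $p\geq p_0$, while $w^TDv>0$ holds outright. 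The degenerate case $\lambda_1=\lambda_2=1$ (when $c=1/4$) is handled by writing $D=2I+N$ with $Nv=\mathbf{1}$ and $N\mathbf{1}=0$, which gives $w^TD^pv\sim p\cdot 2^{p-1}w^T\mathbf{1}$, matching the exponential rate $\log 2 = \log(1+\lambda_1)$.

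The rest of the argument is driven by the location of the critical point of
\[G_D(\pi,y) = (1-\pi)H\!\Bigl(\tfrac{y-\pi}{1-\pi}\Bigr)+H(y)+y\theta+\pi\log(1+\lambda_1), \qquad H(x) = -x\log x - (1-x)\log(1-x),\]
which is the Stirling limit of $\tfrac{1}{n}\log\bigl[\tfrac{p}{n}\binom{n-p-1}{r-p}\binom{n}{r}e^{r\theta}(1+\lambda_1)^p\bigr]$ at $p=\lfloor\pi n\rfloor$, $r=\lfloor yn\rfloor$. After the substitution $u=(y-\pi)/(1-\pi)$, the Euler--Lagrange equations $\partial_uG_D=\partial_\pi G_D=0$ collapse to $u^*=1/(1+\lambda_1)$ and $\lambda_1^2(1-\pi)/(1+\pi\lambda_1)=e^{-\theta}$, giving
\[\pi^* = \frac{\lambda_1^2-e^{-\theta}}{\lambda_1(\lambda_1+e^{-\theta})}, \qquad y^* = \frac{\lambda_1}{\lambda_1+e^{-\theta}},\]
admissible precisely for $\theta\geq -2\log\lambda_1$. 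Substituting back and cancelling the $\log(1+\lambda_1)$, $\log\lambda_1$ and $\theta$ contributions collapses $G_D(\pi^*,y^*)$ to $\log(1+\lambda_1 e^\theta)+\log(1+1/\lambda_1)$. When $\theta\leq -2\log\lambda_1$ the constraint $\pi\geq 0$ is active, the optimum migrates to $\pi=0$, $u=y=e^{\theta/2}/(1+e^{\theta/2})$, and $G_D(0,y) = 2H(u) + u\theta = 2\log(1+e^{\theta/2})$.

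For $\theta > -2\log\lambda_1$ I would take $p^*=\lfloor\pi^* n\rfloor$ in $\sum_p f^n_{p,p}(\theta)w^TD^pv$ and, via Proposition~\ref{pp}, isolate the $r^* = \lfloor y^* n\rfloor$ summand of $f^n_{p^*,p^*}(\theta)$; Stirling's formula applied to the two binomial coefficients together with the bound $w^TD^{p^*}v\geq c_0(1+\lambda_1)^{p^*}$ yields $\tfrac{1}{n}\log w^T(e^\theta D+E)^n v\geq G_D(\pi^*,y^*)+o(1)$. For $\theta\leq -2\log\lambda_1$ I would retain only the $p=1$ term $f^n_{1,1}(\theta)\,w^TDv$, use the formula $f^n_{1,1}(\theta)=\tfrac{1}{n}\sum_{r=1}^{n-1}\binom{n-2}{r-1}\binom{n}{r}e^{r\theta}$ from Proposition~\ref{pp}, and lower-bound it by its summand at $r^*=\lfloor nu^*\rfloor$ via Stirling. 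Adding $\log c$ as in~\eqref{shortlambda} in either regime produces the claim.

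The main obstacle is the algebraic collapse of $G_D(\pi^*,y^*)$: the coupled critical point conditions produce terms in $\log(1+\lambda_1)$, $\log\lambda_1$, $\log(\lambda_1+e^{-\theta})$ and $\theta$ that conspire to the stated two-term closed form, and the cancellations require patient bookkeeping. A secondary subtlety is that $w^TD^pv\geq c_0(1+\lambda_1)^p$ is not uniform in $p$, so in the degenerate case $\lambda_1=1$ the sub-exponential prefactor $p$ has to be tracked through Stirling; it does not affect the exponential rate.
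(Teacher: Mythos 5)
Your proposal is correct and follows essentially the same route as the paper: restrict the expansion to the $j=p$ terms, use $w^{\sf T}D^pv\asymp(1+\lambda_1)^p$ (the paper's~\eqref{eq:wDv}), and optimise the Stirling limit of $\tfrac{p}{n}\binom{n-p-1}{r-p}\binom{n}{r}e^{r\theta}(1+\lambda_1)^p$ over $(\pi,y)=(\delta,\varepsilon)$, with the two regimes separated at $\theta=-2\log\lambda_1$ exactly as in the paper. The only differences are cosmetic or slightly more careful (extracting a single summand instead of invoking the Laplace principle, solving the joint critical-point equations rather than the paper's nested optimisation, and treating $c=\tfrac14$ explicitly), and your optimisers and optimal values agree with the paper's $\delta^{\max}$, $\varepsilon^{\max}$.
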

\begin{proof}
From the explicit form of $D$, $w$, and $v$ in~\eqref{explicit} it can be shown by induction that
\begin{equation}
  \label{eq:wDv}
   w^{\sf T}D^pv=(1+\lambda_1)^p\frac{\alpha}{\lambda_1-\lambda_2} \left\{\frac{\lambda_1}{\alpha-\lambda_1(1-\alpha)} - \left(\frac{1+\lambda_2}{1+\lambda_1} \right)^p \frac{\lambda_2}{\alpha-\lambda_2(1-\alpha)} \right\}.
\end{equation} 
We consider only the terms for which $j=p$ in equation~\eqref{lambdapj} and note that the expression in the square parenthesis in~\eqref{eq:wDv} vanishes in the limit taken in~\eqref{lambdapj}. Hence, using the Laplace principle and Proposition~\ref{pp},
\begin{align*}
\Lambda(\theta) &\geq \lim_{n\rightarrow\infty}\frac{1}{n}\log\left\{\sum_{p=1}^n f^n_{p,p}(\theta)(1+\lambda_1)^p\right\}+\log c\\
&= \lim_{n\rightarrow\infty}\sup_{1\leq p\leq r\leq n} \frac{1}{n}\log\left\{ \frac{p}{n}\binom{n-p-1}{r-p}\binom{n}{r}e^{r\theta}(1+\lambda_1)^p\right\}+\log c.
  \end{align*}
We now use Stirling's formula and a change of variables $\varepsilon=\frac{r}{n}$ and $\delta=\frac{p}{n}$ to obtain
\begin{align*}
\Lambda(\theta) &\geq  {\displaystyle \sup_{0< \delta\leq \varepsilon\leq 1}\left[ \lim_{n\rightarrow\infty}\frac{1}{n}\log\left\{ \delta\binom{n(1-\delta)}{n(\varepsilon-\delta)}\binom{n}{n\varepsilon}\right\}+\varepsilon\theta+\delta\log(1+\lambda_1)\right]+\log c}\\
& =  {\displaystyle\sup_{0<\varepsilon\leq1}\left[\sup_{0<\delta\leq\varepsilon}\left\{-(\varepsilon-\delta)
\log(\varepsilon-\delta)+(1-\delta)\log(1-\delta)+\delta\log(1+\lambda_1)\right\}\right.}\\
&  {\displaystyle \left.\phantom{\sup_{0<\varepsilon\leq1}123} +\varepsilon\theta-2(1-\varepsilon)\log(1-\varepsilon)-\varepsilon\log\varepsilon\right]+\log c}.
  \end{align*}
The inner problem, when $\varepsilon$ is fixed, is solved by choosing $\delta^{\max}=0$, if  $\varepsilon\leq\frac{1}{1+\lambda_1}$, and $\delta^{\max}=\frac{(1+\lambda_1)\varepsilon-1}{\lambda_1}$, if $\varepsilon >\frac{1}{1+\lambda_1}$.
So now we have
\[\begin{array}{rcl}
\Lambda(\theta) &\geq &  {\displaystyle \max\left\{ \sup_{0<\varepsilon\leq\frac{1}{1+\lambda_1}}\left[\varepsilon\theta-2\varepsilon\log\varepsilon-2(1-\varepsilon)\log(1-\varepsilon) \right],\right.}\\
& & {\displaystyle\left.\phantom{\max1} \sup_{\frac{1}{1+\lambda_1}<\varepsilon\leq1}\left[\varepsilon\theta-\varepsilon\log\varepsilon-(1-\varepsilon) \log[\lambda_1(1-\varepsilon)]+\log(1+\lambda_1)   \right]\right\}+\log c.}\\ 
  \end{array}\]
This problem is solved by choosing
\[\varepsilon^{\max}=\left\{\begin{array}{cl}
   \frac{e^{\theta/2}}{1+e^{\theta/2}} & \text{if } \theta\leq -2\log\lambda_1\\
 & \\
    \frac{\lambda_1e^{\theta}}{1+\lambda_1e^{\theta}} &  \text{if } \theta > -2\log\lambda_1. \\
  \end{array}\right.\]

Plugging this value of $\varepsilon^{\max}$ yields the result of the proposition.
\end{proof}

\begin{prop}\label{LB2} For the cumulant generating function $\Lambda$ defined by~\eqref{shortlambda}, 
 \[\Lambda(\theta)\geq
\begin{cases}
\log\left(\frac{e^{\theta}}{1-\alpha}+\frac{1}{\alpha}\right)+\log c & \text{ if }\theta \leq 2\log\left(\frac{1}{\alpha}-1\right),\\
2\log\left(1+e^{\theta/2}\right)+\log c & \text{ if } \theta  >  2\log\left(\frac{1}{\alpha}-1\right).\\
\end{cases}\] 
\end{prop}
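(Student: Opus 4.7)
The plan is to mirror the strategy of Proposition~\ref{LB1}, but this time retain in the double sum~\eqref{lambdapj} only the summands with $j=0$, discarding the (nonnegative) contribution of the others. The left vectorial factor is now especially simple: iterating condition~\eqref{mpaconditionsb} gives $w^{\sf T}E^p=\alpha^{-p}w^{\sf T}$, so $w^{\sf T}E^pv=\alpha^{-p}\,w^{\sf T}v$. Combining this with the explicit formula for $f^n_{p,0}(\theta)$ from Corollary~\ref{fnp0}, dropping the $n$-independent constant $w^{\sf T}v$, and applying the Laplace principle yields
\[\Lambda(\theta)\geq\lim_{n\to\infty}\sup_{\substack{1\leq p\leq n-1\\ 1\leq r\leq n-p}}\frac{1}{n}\log\left\{\tfrac{p}{n}\tbinom{n-p-1}{r-1}\tbinom{n}{r}e^{r\theta}\alpha^{-p}\right\}+\log c.\]

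Next I would use Stirling's formula together with the change of variables $\varepsilon=r/n$, $\delta=p/n$ to reduce the bound to the two-variable variational problem $\Lambda(\theta)\geq\sup_{0<\delta,\;0<\varepsilon\leq 1-\delta}F(\varepsilon,\delta)+\log c$, with
\[F(\varepsilon,\delta)=(1{-}\delta)\log(1{-}\delta)-2\varepsilon\log\varepsilon-(1{-}\delta{-}\varepsilon)\log(1{-}\delta{-}\varepsilon)-(1{-}\varepsilon)\log(1{-}\varepsilon)+\varepsilon\theta-\delta\log\alpha.\]
This has the same shape as the variational problem in Proposition~\ref{LB1}, with the factor $(1+\lambda_1)^p$ replaced by $\alpha^{-p}$ and the binomial from $f^n_{p,p}$ replaced by the one from $f^n_{p,0}$.

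I would then carry out the maximisation in two stages, fixing $\varepsilon$ and optimising in $\delta$ first. Setting $\partial_\delta F=0$ gives the interior critical point $\delta^{\max}=1-\varepsilon/(1-\alpha)$, which lies in $[0,1]$ precisely when $\varepsilon\leq 1-\alpha$; outside that range the $\delta$-maximum is attained at the boundary $\delta=0$. Substituting each choice back produces two single-variable problems in $\varepsilon$: for $\varepsilon\leq 1-\alpha$ a telescoping cancellation reduces $F$ to $-\varepsilon\log\varepsilon-(1{-}\varepsilon)\log(1{-}\varepsilon)+\varepsilon\log\tfrac{\alpha e^\theta}{1-\alpha}-\log\alpha$, whose maximum equals $\log\bigl(\tfrac{e^\theta}{1-\alpha}+\tfrac{1}{\alpha}\bigr)$, attained at $\varepsilon^{*}=\alpha e^\theta/(1-\alpha+\alpha e^\theta)$; for $\delta=0$ we get $-2\varepsilon\log\varepsilon-2(1{-}\varepsilon)\log(1{-}\varepsilon)+\varepsilon\theta$, whose maximum equals $2\log(1+e^{\theta/2})$, attained at $\varepsilon^{**}=e^{\theta/2}/(1+e^{\theta/2})$. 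The case split in the statement is dictated by the value of $\theta$ at which the unconstrained optima coincide with the constraint $\varepsilon=1-\alpha$, namely $\theta=2\log(\tfrac{1}{\alpha}-1)$.

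The main obstacle is the bookkeeping of the two-stage optimisation, and in particular the algebraic collapse of the three logarithmic terms into a single clean expression after substituting $\delta^{\max}$ into $F$. A useful sanity check is that the threshold $\theta=2\log(\tfrac{1}{\alpha}-1)$ coincides exactly with the first breakpoint in the upper bound of Proposition~\ref{UB}, ensuring that on that interval the two bounds will meet.
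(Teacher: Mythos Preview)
Your proposal is correct and follows essentially the same route as the paper: keep only the $j=0$ summands, use $w^{\sf T}E^p=\alpha^{-p}w^{\sf T}$ together with Corollary~\ref{fnp0}, reduce via Stirling and the substitution $\varepsilon=r/n$, $\delta=p/n$ to the same two-variable variational problem, and optimise first in $\delta$ (with interior critical point $\delta^{\max}=1-\varepsilon/(1-\alpha)$ valid for $\varepsilon\le 1-\alpha$, boundary $\delta=0$ otherwise) and then in $\varepsilon$. Your identification of the two optimisers $\varepsilon^{*}$ and $\varepsilon^{**}$ with the corresponding $\theta$-ranges is in fact the correct pairing; the paper's displayed $\varepsilon^{\max}$ has the two cases listed in swapped order, though its final conclusion is the same as yours.
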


\begin{proof}
 We follow the same technique as in the previous proposition, but now consider only those values for which $j=0$ in~\eqref{powerpj}. Hence, using Corollary \ref{fnp0}
\[\begin{array}{rcl}
\Lambda(\theta) &\geq &  {\displaystyle \lim_{n\rightarrow\infty}\frac{1}{n}\log\left\{\sum_{p=1}^n f^n_{p,0}(\theta)\frac{1}{\alpha^p}\right\}+\log c}\\
&= & {\displaystyle \lim_{n\rightarrow\infty}\sup_{\substack{1 \leq p \leq n \\ 0\leq r\leq n-p}}\frac{1}{n} \log\left\{ \frac{p}{n}\binom{n-p-1}{r-1}\binom{n}{r}e^{r\theta}\alpha^{-p}\right\}+\log c}\\
  \end{array}\]
With the same change of variables as before,  $\varepsilon=\frac{r}{n}$ and $\delta=\frac{p}{n}$,  we have
\[\begin{array}{rcl}
\Lambda(\theta) &\geq & \displaystyle\sup_{\substack{0 < \delta \leq 1 \\ 0\leq \varepsilon \leq 1-\delta}}\left[ \lim_{n\rightarrow\infty}\frac{1}{n}\log\left\{ \delta\binom{n(1-\delta)}{n\varepsilon}\binom{n}{n\varepsilon}\right\}+\varepsilon\theta-\delta\log\alpha\right]+\log v\\
& = & \displaystyle\sup_{\substack{0 \leq \delta \leq 1 \\ 0\leq \varepsilon\leq 1-\delta}}\left\{(1-\delta)\log(1-\delta)-(1-\delta-\varepsilon)\log(1-\delta-\varepsilon)-\delta\log\alpha\right.\\
& & \phantom{\sup_{\substack{0 \leq \delta \leq 1 \\ 0\leq \varepsilon\leq 1-\delta}}} \left. -2\varepsilon\log\varepsilon-(1-\varepsilon)\log(1-\varepsilon)+\varepsilon\theta\right\} +\log c. \\
  \end{array}\]
Splitting the problem in two, for a fixed $\varepsilon$, we find the optimal $\delta$ as 
$\delta^{\max}=0$, if $1-\alpha\leq \varepsilon\leq 1$, and \smash{$\delta^{\max}=1-\frac{\varepsilon}{1-\alpha}$},
if $0\leq \varepsilon < 1-\alpha$.
And the remaining problem is solved by choosing the optimum $\varepsilon$ as
\[\varepsilon^{\max}=\begin{cases}
   \frac{e^{\theta/2}}{1+e^{\theta/2}} & \text{if } \theta\leq 2\log\left(\frac{1}{\alpha}-1\right),\\
    \frac{\alpha}{\alpha+e^{-\theta}(1-\alpha)} &  \text{if } \theta > 2\log\left(\frac{1}{\alpha}-1\right). \\
  \end{cases}\]
With these values of $\varepsilon$ we get the desired result.
\end{proof}

\begin{cor}\label{LB}For $\Lambda$ defined by~\eqref{shortlambda},
  \[\Lambda(\theta)\geq
 \begin{cases}
 \log\left(\frac{e^\theta}{1-\alpha}+\frac{1}{\alpha}\right)+\log c & \text{if }-\infty < \theta \leq 2\log\left(\frac{1}{\alpha}-1\right),\\[2mm]
 2\log\left(1+e^{\theta/2}\right)+\log c & \text{if } 2\log\left(\frac{1}{\alpha}-1\right)  <  \theta  \leq  -2\log\lambda_1,\\[2mm]
 \log\left(1+\lambda_1e^{\theta}\right)+\log\left(1+\frac{1}{\lambda_1}\right)+\log c & \text{if }-2\log\lambda_1<\theta<\infty.
 \end{cases}\]
\end{cor}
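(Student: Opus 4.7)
The plan is that this corollary is essentially a bookkeeping exercise combining Propositions~\ref{LB1} and~\ref{LB2}, so I would not expect any substantive new work. Let me write $\theta_1 := 2\log(1/\alpha - 1)$ and $\theta_2 := -2\log\lambda_1$ for the two piecewise thresholds, so that the corollary claims the three bounds on the intervals $(-\infty, \theta_1]$, $(\theta_1, \theta_2]$ and $(\theta_2, \infty)$ respectively.

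First, on $(-\infty, \theta_1]$ the stated bound is precisely what Proposition~\ref{LB2} delivers, so Case~1 is immediate. Symmetrically, on $(\theta_2, \infty)$ the stated bound is exactly what Proposition~\ref{LB1} delivers, so Case~3 is immediate. For the middle interval $(\theta_1, \theta_2]$, both propositions independently yield the claimed bound $2\log(1+e^{\theta/2}) + \log c$: Proposition~\ref{LB1} gives it for all $\theta \leq \theta_2$ and Proposition~\ref{LB2} gives it for all $\theta > \theta_1$. If this interval is empty (as happens in the regime of part~(a) of Theorem~\ref{main}, where $\theta_1 > \theta_2$), Case~2 is vacuous and Cases~1 and~3 merely overlap on $(\theta_2, \theta_1]$, with both stated bounds holding there simultaneously, which is consistent.

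Although not strictly needed, it is worth remarking on why the corollary's piecewise choice is the natural one: at $\theta = \theta_1$ the LB2 formula and the LB1 middle formula coincide, via the factorisation
\[
\frac{e^\theta}{1-\alpha} + \frac{1}{\alpha} - \big(1+e^{\theta/2}\big)^2 = \frac{(\alpha e^{\theta/2} - (1-\alpha))^2}{\alpha(1-\alpha)} \geq 0,
\]
so LB2 dominates on $(-\infty, \theta_1]$; and at $\theta = \theta_2$ the LB1 right formula and the LB2 middle formula coincide, via AM--GM in the form $\lambda_1 e^\theta + 1/\lambda_1 \geq 2 e^{\theta/2}$, so LB1 dominates on $(\theta_2, \infty)$. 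Thus the corollary records the pointwise maximum of the two lower bounds.

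There is no genuine obstacle here: all hard analytic work is absorbed in the proofs of Propositions~\ref{LB1} and~\ref{LB2}, and the corollary follows by case distinction on where $\theta$ sits relative to $\theta_1$ and $\theta_2$, together with the two elementary inequalities above if one wishes to confirm that the reported piecewise formula is tight amongst the bounds produced by the combinatorial method.
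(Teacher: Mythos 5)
Your proposal is correct and follows essentially the same route as the paper: the corollary is obtained by piecing together Propositions~\ref{LB1} and~\ref{LB2}, noting that both give $2\log(1+e^{\theta/2})+\log c$ on the middle interval and that each proposition directly supplies the stated bound on its respective outer interval. Your explicit verifications (the square identity at $\theta=2\log(\tfrac1\alpha-1)$ and the AM--GM comparison at $\theta=-2\log\lambda_1$) simply make precise the ``comparison of the bounds'' that the paper's one-line proof alludes to.
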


\begin{proof}
  The bounds from Propositions~\ref{LB1} and~\ref{LB2} are the same in the   interval $2\log\left(\frac{1}{\alpha}-1\right)\leq\theta\leq -2\log\lambda_1$. In   the other intervals, a comparison of the bounds establishes the   claim.
\end{proof}

%Proof of main result
%%%%%%%%%%%%%%%%%%%%%%%%%%%%%%%%%%%%%%%%%%
%%%%%%%%%%%%%%%%%%%%%%%%%%%%%%%%%%%%%%%%%%
%%%%%%%%%%%%%%%%%%%%%%%%%%%%%%%%%%%%%%%%%%
\section{The rate function}

Summarising, we have the following result.

\begin{cor}\label{longlambda}For the cumulant generating function $\Lambda$ defined by~\eqref{shortlambda},
 \begin{equation}\label{explicitlambda}\Lambda(\theta)=
\begin{cases}
\log\left(\frac{e^\theta}{1-\alpha}+\frac{1}{\alpha}\right)+\log c & \text{if }-\infty < \theta \leq 2\log\left(\frac{1}{\alpha}-1\right),\\[2mm]
2\log\left(1+e^{\theta/2}\right)+\log c & \text{if } 2\log\left(\frac{1}{\alpha}-1\right)  <  \theta  \leq  -2\log\lambda_1,\\[2mm]
\log\left(1+\lambda_1e^{\theta}\right)+\log\left(1+\frac{1}{\lambda_1}\right)+\log c & \text{if }-2\log\lambda_1<\theta<\infty.
\end{cases}\end{equation}
\end{cor}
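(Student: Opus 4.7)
The plan is to simply combine the upper bound from Proposition~\ref{UB} with the lower bound from Corollary~\ref{LB} and observe that they coincide piecewise on the same three-piece partition of $\mathbb R$. Concretely, both results express the bound as $\log c$ plus a function of $\theta$ whose form switches at the same two breakpoints, $\theta_1 = 2\log(\tfrac{1}{\alpha}-1)$ and $\theta_2 = -2\log\lambda_1$.

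On the interval $(-\infty,\theta_1]$, the upper bound is $\log\bigl(\tfrac{e^\theta}{1-\alpha}+\tfrac{1}{\alpha}\bigr)+\log c$ (this is what the optimal weight $s^* = (\tfrac{1}{\alpha}-1)^2$ produces in Proposition~\ref{UB}) and the lower bound from Proposition~\ref{LB2} is exactly the same expression; on $(\theta_1,\theta_2]$ the optimal choice $s^*=e^\theta$ in Proposition~\ref{UB} yields $2\log(1+e^{\theta/2})+\log c$, which matches the lower bound in that regime (the common piece of Propositions~\ref{LB1} and~\ref{LB2}); on $(\theta_2,\infty)$ the choice $s^* = 1/\lambda_1^{2}$ gives $\log(1+\lambda_1 e^\theta)+\log(1+1/\lambda_1)+\log c$, and this is precisely the lower bound from Proposition~\ref{LB1}.

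Before writing the chain of inequalities, I would remark that the admissible range of weights $s\in\bigl((\tfrac{1}{\alpha}-1)^2,\tfrac{1}{\lambda_1^2}\bigr)$ used in Proposition~\ref{UB} is non-empty precisely because the hypotheses of Theorem~\ref{grossteo} force $c\leq \alpha(1-\alpha)\wedge\tfrac14$, which in turn gives $\lambda_1 \leq \tfrac{\alpha}{1-\alpha}$ (equivalent to $\tfrac{1}{\lambda_1}\geq \tfrac{1}{\alpha}-1$); consequently the two breakpoints $\theta_1\leq\theta_2$, so the three cases in~\eqref{explicitlambda} are consistent and exhaust $\mathbb R$. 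Once this is noted, the proof is essentially a one-line sandwich argument: for every $\theta\in\mathbb R$, the quantity $\Lambda(\theta)$ defined by~\eqref{shortlambda} is squeezed between identical upper and lower bounds, hence equal to the common value in each of the three pieces.

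There is no real obstacle here, since all of the analytical work has already been carried out in Sections~4 and~5; the only thing to double-check is that the piecewise definitions line up at the breakpoints. At $\theta=\theta_1$ one verifies $\log\bigl(\tfrac{e^{\theta_1}}{1-\alpha}+\tfrac{1}{\alpha}\bigr)= 2\log(1+e^{\theta_1/2})$ by direct substitution of $e^{\theta_1/2}=\tfrac{1}{\alpha}-1$, and at $\theta=\theta_2$ one checks $2\log(1+e^{\theta_2/2})=\log(1+\lambda_1 e^{\theta_2})+\log(1+1/\lambda_1)$ using $e^{\theta_2/2}=1/\lambda_1$; so $\Lambda$ is continuous across the thresholds. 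This continuity will later be needed when verifying the differentiability hypothesis of the G\"artner--Ellis theorem (Theorem~\ref{GE}), which is the next step of the overall argument leading to Theorem~\ref{main}.
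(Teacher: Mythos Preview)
Your proposal is correct and follows exactly the paper's approach: the paper's proof is the single sentence that the upper bound from Proposition~\ref{UB} and the lower bound from Corollary~\ref{LB} coincide, and you have simply spelled this out in more detail (including the helpful consistency check that $\theta_1\leq\theta_2$ and the continuity verification at the breakpoints). There is nothing to add or correct.
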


\begin{proof}
 This follows from the fact that the upper and lower bounds from Proposition~\ref{UB} and Corollary~\ref{LB}, respectively, are the same.  
\end{proof}

Finally we have the necessary tools to prove Theorem~\ref{main}.

\begin{proof}
The rate function in the case \emph{(a)} $\alpha\leq {1}/{2}$ and $\rho<1-\alpha$ is known from Cram\'er's theorem, see e.g. Exercise 2.2.23 (b) in \cite{zeitouni}. For the case \emph{(c)}, $\alpha>{1}/{2}$ and $\tfrac{1}{2}<\rho<\alpha$, we show that the function 
$\Lambda$ defined by~\eqref{shortlambda}, given explicitly in Corollary~\ref{longlambda}, satisfies 
the hypotheses of the G\"artner-Ellis theorem~\ref{GE}.
Note that $\Lambda$ is defined for all real numbers. An evaluation at the boundaries of the domains
gives   \[\Lambda\left(2\log(\tfrac{1}{\alpha}-1)\right)=-2\log\alpha+\log c =\lim_{h\to0^{+}}\Lambda\left(2\log(\tfrac{1}{\alpha}-1)-h\right)\]
and
\[\Lambda\left(-2\log\lambda_1\right)=2\log(1+\tfrac{1}{\lambda_1})+\log c =\lim_{h\to0^{+}}\Lambda\left(-2\log\lambda_1+h\right),\]
which implies that it is continuous in all $\mathbb R$. Moreover,
\[\lim_{h\to0^{\pm}}\frac{\Lambda\left(2\log(\tfrac{1}{\alpha}-1)+h\right)-\Lambda\left(2\log(\tfrac{1}{\alpha}-1)\right)}{h}=1-\alpha\]
and
\[\lim_{h\to0^{\pm}}\frac{\Lambda\left(-2\log\lambda_1+h\right)-\Lambda\left(-2\log\lambda_1\right)}{h}=\frac{1}{1+\lambda_1}.\]
Therefore, $\Lambda$ is differentiable in $\mathbb R$. By the G\"artner-Ellis theorem we  find the rate function, \[I(z)=\sup_{\theta\in\mathbb R}\left\{z\theta-\Lambda(\theta)\right\}
\qquad \mbox{ for $z\in(0,1)$.}\] 
For fixed $z\in(0,1)$ the function $\theta\mapsto z\theta-\Lambda(\theta)$ is well-defined, continuous, and differentiable in all $\mathbb R$. It is also a concave function and hence the maximum is reached at a value of $\theta$ where the derivative vanishes. Since
\[\frac{d}{d\theta}\left(z\theta-\Lambda(\theta)\right)=
\begin{cases}
z-\frac{\alpha e^\theta}{\alpha e^\theta +1-\alpha}& \text{if } -\infty<\theta\leq\log\left(\frac{1}{\alpha}-1\right)^2,\\[2mm]
z-\frac{ e^{\theta/2}}{1+ e^{\theta/2}} & \text{if } \log\left(\frac{1}{\alpha}-1\right)^2<\theta\leq -2\log\lambda_1,\\[2mm]
z-\frac{\lambda_1 e^{\theta}}{1+ \lambda_1 e^{\theta}} & \text{if }  -2\log\lambda_1<\theta<\infty,\\
\end{cases}\]
we get $\frac{d}{d\theta}\left(z\theta-\Lambda(\theta)\right)=0$ if and only if
\[\begin{array}{clcl}
& \theta=\log\frac{z(1-\alpha)}{\alpha(1-z)}& \text{and} & \theta\leq2\log\left(\frac{1}{\alpha}-1\right),\\[2mm]
\text{or}&\theta=2\log\frac{z}{1-z}& \text{and}& 2\log\left(\frac{1}{\alpha}-1\right)<\theta\leq-2\log\lambda_1,\\[2mm]
\text{or}&\theta=\log\frac{z}{\lambda_1(1-z)}& \text{and}& \theta>-2\log\lambda_1.\\
\end{array}\]
This means that 
\[\begin{array}{lcl}
\theta=\log\frac{z(1-\alpha)}{\alpha(1-z)}& \Leftrightarrow & 0<z<1-\alpha,\\[2mm]
\theta=2\log\frac{z}{1-z}& \Leftrightarrow & 1-\alpha < z\leq\frac{1}{1+\lambda_1},\\[2mm]
\theta=\log\frac{z}{\lambda_1(1-z)}& \Leftrightarrow & \frac{1}{1+\lambda_1}<z<1.\\
\end{array}\]
Since the value of $\theta$ that satisfies $\tfrac{d}{d\theta}\left(z\theta-\Lambda(\theta)\right)=0$ is unique it must be the maximum. By plugging in this value in~\eqref{explicitlambda}, we reach the desired result.
The remaining case \emph{(b)} is obtained simply by taking the limit of case \emph{(c)} as $c\to\tfrac{1}{4}$.
\end{proof}
\medskip

An alternative approach to the problems studied here is based on translation into a random walk problem. Rewriting the infinite matrix $e^\theta D +E$ as $2(1+e^\theta)Q_\theta$, one can see that $Q_\theta$ can be interpreted as the probability transition matrix of a discrete time random walk~$\{X_n\}_{n\geq0}$ on $\mathbb N_0$ killed at the origin. By the form of the vector $w^{\sf T}$, shown in~\eqref{explicit}, the product $w^{\sf T}Q_\theta^n$ is, up to a normalising constant, the sub-probability distribution of the $n$-th step of the random walk starting from a geometric distribution with parameter $2-\tfrac{1}{\alpha}$. If $T$ denotes the killing time of the random walk, then~\eqref{explicitlambda} becomes
\begin{align*}\Lambda(\theta) & = \lim_{n\to\infty}\tfrac{1}{n}\log w^{\sf T}(e^\theta D+E)^nv+\log c\\
& = \log\big(2c(1+e^\theta)\big) + \lim_{n\to\infty}\tfrac{1}{n}\log\E\big[(\lambda_1^{X_n}-\lambda_2^{X_n})\indicator{\{T>n\}}\big], 
\end{align*}
using the form of $v$ given in~\eqref{explicit}. The latter rate may be evaluated using a functional large deviation principle for the random walk.  This alternative
method hinges on the availability of the large deviation result and the resolvability of the variational problems that come out of its application, which may be more 
complicated than our original approach. In principle, however, the method seems suited to  not only reveal large deviations for the asymptotic density, but also for a 
density profile depending on a macroscopic space variable, as done by Derrida \emph{et al.} in \cite{DLS} in the case of the finite TASEP. 
\medskip

By contrast, the technique presented in this paper is more elementary and direct. It therefore seems to be more flexible, giving hope to produce large deviation principles
for a range of other systems with spatially correlated distributions given by a matrix product representation. In particular there are several natural variants even of the example of a semi-infinite TASEP considered in this paper: For example, we would be interested in identifying a large deviation principle for the semi-infinite TASEP process starting from a configuration given by a value of $\rho$ in Theorem~\ref{liggettteo} with 
$\rho>\alpha>\frac12$, which was so far excluded for technical reasons.
It also seems feasible to generalise the large deviation principle for the overall density to a large deviation principle for a density profile. Finally, it would be interesting to analyse the non-stationary TASEP
using a time dependent matrix representation, as given by Stinchcombe and Sch\"utz in~\cite{SS}. 
\bigskip

%\newpage

\textbf{Acknowledgements:} HGD has been supported by CONACYT with CVU 302663, PM by EPSRC Grant EP/K016075/1. JZ has been partially supported by the Leverhulme Trust, Grant RPG-2013-261,
EPSRC, Grant EP/K027743/1 and a Royal Society Wolfson Research Merit Award. We also thank Rob Jack for his useful comments.

% The bibliography
%\bibliographystyle{apalike}
\bibliographystyle{plain}
\bibliography{biblio}
%\addcontentsline{toc}{chapter}{Bibliography}

\end{document}